\newcommand{\R}{\mathbb{R}}
\newcommand{\E}{\mathbb{E}}
\newcommand{\N}{\mathbb{N}}
\newcommand{\1}{\mathbf{1}}
\newcommand{\Prob}{\mathbb{P}}
\newcommand{\Q}{\mathbb{Q}}
\newcommand{\CF}{\mathcal{F}}
\newcommand{\tr}{^\mathsf{T}}
\newcommand{\Keywords}[1]{\par\noindent
{\small{\textsc KEY WORDS\/}: #1}}
\newtheorem{thm}{Theorem}
\newtheorem{prop}{Proposition}
\newtheorem{lemma}{Lemma}
\newtheorem{cor}{Corollary}
\theoremstyle{remark}
\newtheorem{ex}{Example}
\theoremstyle{remark}
\theoremstyle{remark}
\newtheorem{remark}{Remark}
\numberwithin{equation}{section}
\numberwithin{thm}{section}
\numberwithin{lemma}{section}
\numberwithin{cor}{section}
\numberwithin{remark}{section}
\numberwithin{ex}{section}
\numberwithin{defin}{section}
\title{HEDGING UNDER ARBITRAGE\thanks{I am indebted to two anonymous referees and an associate editor for their careful
reading and constructive feedback. Many thanks go to Ioannis
Karatzas and Hans F\"ollmer for sharing their insights and for
their  helpful comments on previous drafts of this work. I am
grateful to Michael Agne, Adrian Banner, Daniel Fernholz, Robert
Fernholz, Ashley Griffith, Tomoyuki Ichiba, Phi Long Nguen-Tranh, Sergio Pulido,
Subhankar Sadhukhan, Emilio Seijo, Li Song, Winslow Strong, Johan
Tysk, and Hao Xing for  fruitful discussions on the subject
matter of this paper. This work was partially supported by the
National Science Foundation DMS Grant
 09-05754 and by a Faculty Fellowship of Columbia University.
Results in this paper are drawn in part from the second chapter of the author's doctoral thesis \citet{Ruf_ots}
supervised by Ioannis Karatzas.
}}
\author{Johannes Ruf\footnote{Address correspondance to Johannes Ruf, Department of Statistics, Columbia University,
    1255 Amsterdam Avenue, New York, NY 10027, USA; ruf@stat.columbia.edu.} \\
    Columbia University \\ \today
    }
\date{}
\begin{document}
\thispagestyle{plain} \maketitle
\begin{abstract}
\noindent
   It is shown that delta hedging provides the optimal trading strategy in terms
    of minimal required initial capital
    to replicate a given terminal payoff in a continuous-time Markovian context.
    This holds true in market models in which no equivalent local martingale measure exists but only a
    square-integrable market price of risk.
    A new probability measure is constructed, which takes the place of an equivalent local martingale measure.
    In order to ensure the existence of the delta hedge, sufficient conditions are derived for the necessary differentiability of expectations indexed over
    the initial market configuration.
    The phenomenon of ``bubbles,'' which has recently been frequently discussed in the academic literature,
    is a special case of the setting in this paper.
    Several examples at the end illustrate the techniques described in this work.
    \Keywords{Benchmark Approach, Stochastic Portfolio Theory,
    bubbles,
    local martingales, F\"ollmer measure, continuous time, diffusions, stochastic discount factor, market price of risk,
    trading strategies, arbitrage, pricing,
    hedging, options, put-call-parity, Black-Scholes PDE, stochastic flows, Schauder estimates, Bessel process}
\end{abstract}
\section{INTRODUCTION}
In a financial market, an investor usually has several trading strategies at her disposal to obtain a given wealth at
a specified point in time.
 For example, if the investor wanted to cover a short-position in a given stock tomorrow at the cheapest cost today,
buying the stock today is generally not optimal, as there may
be a trading strategy requiring less initial capital that still
replicates the exact stock price tomorrow.  In this paper, we show that optimal trading strategies,
in the sense of minimal required initial capital, can be represented as delta hedges.

This paper has been motivated by the problem of finding trading strategies to exploit relative arbitrage
opportunities, which arise naturally in the framework of Stochastic Portfolio Theory (SPT).
For that, we generalize the results of
\citet{FK}'s paper ``On optimal arbitrage,'' in which specifically the market portfolio  is
examined, to a wide class of terminal wealths which can be optimally replicated by delta hedges.
For an overview of SPT and a discussion of relative arbitrage opportunities, we recommend the reader consult the
monograph by \citet{Fe} and the survey paper by \citet{FK_survey}.
The problem investigated here is directly linked to the question of computing hedges of contingent claims, which has
 been studied within the Benchmark Approach (BA), developed
by Eckhard Platen and co-authors.
Indeed, we generalize some of the results in the BA here and provide tools to compute the so-called
``real-world prices'' of contingent claims under that approach.
The monograph by \citet{PH} provides an excellent overview of the BA.

We shall not restrict ourselves only to markets satisfying the  the ``No free lunch with vanishing risk'' (NFLVR)
or, more precisely, the ``No arbitrage for general admissible integrands'' (NA) condition.\footnote{We
refer the reader to the monograph by \citet{DS_mono} for a thorough introduction to NA, NFLVR and other notions of arbitrage.
Since we shall assume the existence of a square-integrable market price of risk, we implicitly impose the condition that
NFLVR fails if and only if NA fails; see \citet{KK}, Proposition~3.2.}
Thus, we cannot rely on the existence of an equivalent local martingale measure (ELMM), which we otherwise would have
done.
However, we shall construct another probability
measure to take the place of the ``risk-neutral''
measure. We do not assume an ELMM a priori for several reasons. First, we cannot always assume the existence of a
statistical test that relies upon stock price observations to determine whether an ELMM exists, as illustrated in \citet{KK},
Example~3.7.  Second, examining arbitrage opportunities, rather than excluding them a priori, is of interest in itself.
Further arguments and empirical evidence supporting the consideration of models without an ELMM
 are discussed in \citet{K_balance}, Section~0.1 and \citet{PH_hedging}, Section~1.
A model of economic equilibrium for such models is provided in \citet{LW_snack}.
In the spirit of these papers, we shall impose some restrictions on the
arbitrage opportunities and exclude a priori models which imply ``unbounded profit with bounded risk,''
 which can be recognized by a typical agent.

There have been several recent papers treating the subject of ``bubbles'' within models guaranteeing NFLVR;
 a very incomplete list consists of the work
by \citet{LW_bubble}, \citet{CH}, \citet{HLW}, \citet{JP_complete, JP_incomplete}, \citet{PP}, and \citet{ET}.
A bubble is usually defined as the difference between the market price of a tradeable asset and its smallest hedging price.
The analysis here includes the case of bubbles, but is more
general, as it also allows for models without an ELMM.
To wit, while the bubbles literature concentrates on a single stock whose price process
is modeled as a strict local
martingale, we consider markets with several assets with the stochastic discount factor itself being
represented by a (possibly strict) local martingale.
In the case of an asset with a bubble, our contribution is limited to the explicit representation of the
optimal replicating strategy as a delta hedge. We shall also discuss in this context the reciprocal of the three-dimensional
Bessel process as the standard example for a bubble.

We set up our analysis in a continuous-time Markovian context; to wit, we focus on stock price processes whose
mean rates of return and volatility coefficients only depend on time and on the current market configuration.
Since we do not rely  on a martingale representation theorem, we can allow for a larger number of
driving Brownian motions than the number of stocks, which generalizes the ideas of \citet{FK} to
not only a larger set of payoffs, but also to a broader set of models for the specific case of the market portfolio.
We shall prove that a classical delta hedge yields the cheapest hedging strategy for European contingent claims.
This is of course well-known in the case where an ELMM exists and is extended here to models which
allow for arbitrage opportunities and that
are not necessarily complete. In this context, we provide sufficient conditions to ensure the differentiability
of the hedging price, generalizing results by \citet{HS}, \citet{JT}, and \citet{ET}.
This set of conditions is also applicable to models satisfying the NFLVR assumption.
Because the computations for the optimal trading strategy under the ``real-world'' measure are often too involved
and because we cannot always rely on an ELMM,
 we derive a non-equivalent change of measure including a generalized Bayes' rule.

The next section introduces the market model and trading strategies.
Section~\ref{S market price} provides a discussion about the market price of risk.
Section~\ref{S optimal strategies} contains the precise
representation of an optimal strategy to hedge a non path-dependent European
claim and sufficient conditions for the differentiability of the hedging price. A modified put-call parity follows directly.
We suggest in Section~\ref{S hedging price} a change to some non-equivalent probability measure that simplifies computations.
Section~\ref{S examples} then provides several examples  and
Section~\ref{S conclusion} draws the conclusions.
\section{MARKET MODEL AND TRADING STRATEGIES}  \label{S market}
In this section, we introduce the market model and trading
strategies. We assume the perspective of a small investor who
takes positions in a frictionless financial market with finite time
horizon $T$. We shall use the notation $\R_+^d := \{s = (s_1,
\ldots, s_d)\tr \in \R^d, s_i>0, \text{ for all } i = 1, \ldots,
d\}$ and assume a market in which the stock price processes are
modeled as positive continuous Markovian semimartingales. That
is, we consider a financial market $S(\cdot)=(S_1(\cdot), \ldots,
S_d(\cdot))\tr$ of the form
\begin{equation}  \label{market}
    d S_i(t) = S_i(t) \left( \mu_i (t, S(t)) d t +
    \sum_{k=1}^K \sigma_{i,k}(t, S(t)) d W_k(t) \right)
\end{equation}
for all $i=1,\ldots, d$ and $t \in [0,T]$ starting at $S(0) \in \R^d_+$ and a money market $B(\cdot)$.
Here $\mu: [0,T] \times \R^d_+ \rightarrow \R^d$ denotes the  mean rate of return and
$\sigma: [0,T] \times \R^d_+ \rightarrow \R^{d \times K}$ denotes the volatility.
We assume that both functions are measurable.

For the sake
of convenience we only consider discounted (forward) prices and
set the interest rate constant to zero; that is, $B(\cdot) \equiv 1$.
The flow of information is
modeled as a right-continuous filtration $\mathbb{F} = \{\CF(t)\}_{0 \leq t \leq T}$
such that $W(\cdot) = (W_1(\cdot), \ldots, W_K(\cdot))\tr$
is a $K$-dimensional Brownian motion with independent components. In Section \ref{S hedging price}, we
impose more conditions on the filtration $\mathbb{F}$ and the underlying probability space $\Omega$.
The underlying measure and its expectation will be denoted by $\Prob$ and $\E$, respectively.

We only
consider those mean rates of return $\mu$ and volatilities $\sigma$ that imply
the stock prices $ S_1 (\cdot), \cdots, S_d (\cdot)$ exist and are unique and strictly positive.
More precisely, denoting the  covariance process of
the stocks by
$a (\cdot, \cdot) = \sigma(\cdot, \cdot) \sigma\tr(\cdot, \cdot)$,
we impose the almost sure integrability condition
\begin{equation*}
  \sum_{i=1}^d \int_0^T  \left(\left|\mu_i (t, S(t)) \right| +  a_{i,i} (t, S(t)) \right)  d t  <  \infty.
\end{equation*}

Next, we introduce the notion of trading strategies and associated wealth processes to be able
to  describe formally delta hedging below.
We denote the number of shares held by an investor at time~$t$
 by $\eta(t) = (\eta_1(t), \ldots, \eta_d(t))^\mathsf{T}$ and  call $\eta(\cdot)$ a \emph{trading strategy}
or in short, a \emph{strategy}. We assume that $\eta(\cdot)$ is progressively measurable with respect to $\mathbb{F}$
and  self-financing. This yields for the
corresponding wealth process $V^{v,\eta}(\cdot)$ of an investor with initial capital $v>0$ the dynamics
\begin{equation*}
     d V^{v,\eta}(t) = \sum_{i=1}^d \eta_i(t) dS_i(t)
\end{equation*}
for all $t \in [0,T]$ and $V^{v,\eta}(0)=v$. To ensure that
$V^{v,\eta}(\cdot)$ is well-defined and to exclude doubling
strategies we restrict ourselves to trading strategies which
satisfy $V^{v,\eta}(t) \geq 0$ for a given initial wealth $v>0$, and the almost sure integrability
condition
    \begin{equation*}
        \sum_{i=1}^d \int_0^T  \left(
        S_i(t) |\eta_i(t) \mu_i(t, S(t))| + S_i^2(t) \eta_i^2(t) a_{i,i}(t, S(t))\right) d t < \infty.  \label{pi int cond}
    \end{equation*}
\section{MARKET PRICE OF RISK AND STOCHASTIC DISCOUNT FACTOR} \label{S market price}
This section discusses two important components of the market model.
We assume that the market model of \eqref{market} implies a \emph{market price of risk (MPR)},
which generalizes the concept of the Sharpe ratio
to several dimensions.
More precisely, an MPR is a progressively measurable process $\theta(\cdot)$,
     which maps the volatility structure $\sigma$ onto the mean rate of return $\mu$. That is,
    \begin{equation} \label{E market price}
        \mu(t,S(t)) = \sigma(t,S(t)) \theta(t)
    \end{equation}
    for all $t \in [0,T]$ holds almost surely.
We further assume that $\theta(\cdot)$ is square-integrable, to
wit,
    \begin{equation} \label{E market price int}
        \int_0^T \|\theta(t)\|^2 dt < \infty
    \end{equation}
almost surely.
An MPR does not have to be
uniquely determined. Uniqueness is intrinsically connected to
completeness, which we need not assume. In general, infinitely
many MPRs may exist. An example for non-uniqueness
is given following Proposition~\ref{P independence} below.

The existence of an MPR is a central assumption in both the BA
\citep[see][Chapter~10]{PH} and SPT \citep[see][Section~6]{FK_survey}.
This assumption enables us to discuss hedging prices, as we do throughout this paper, since it
excludes scalable arbitrage opportunities by guaranteeing
`no unbounded profit with bounded risk'' (NUPBR) as demonstrated in \citet{KK}.
Similar assumptions
    have been discussed in the economic literature. For example, in the terminology of \citet{LW_snack},  the existence
    of a square-integrable MPR excludes ``cheap thrills''
    but not necessarily ``free snacks.'' Theorem~2 of \citet{LW_snack} shows that a market with a square-integrable MPR
    is consistent with an equilibrium where agents prefer more to less.

Based upon the MPR, we can now define the \emph{stochastic discount factor (SDF)} as
\begin{equation}
 \label{E Z}
      Z^\theta(t) :=   \exp \left( - \int_0^t
   \theta\tr (u)  d W(u) - \frac{1}{2} \int_0^t
   \| \theta (u)\|^2  du \right)
\end{equation}
for all $t \in [0,T]$.
In classical no-arbitrage theory, $Z^\theta(\cdot)$ represents the Radon-Nikodym derivative
which translates the ``real-world'' measure into the generic ``risk-neutral'' measure with the money market as the underlying.
Since we do not want to impose NFLVR a priori in this work, but are rather interested in situations in which NFLVR does not
necessarily hold, we shall not assume that the SDF $Z^\theta(\cdot)$ is a true martingale.
Cases where $Z^\theta(\cdot)$ is only a local martingale have, for example, been discussed by
\citet{KLSX}, \citet{Schweizer_1992},
in the BA starting with \citet{Pl2002} and \citet{HP_consistent, HP_index} and in SPT;
see, for example, \citet{FKK} and especially, \citet{FK}.

In this context, it is important to remind ourselves that $Z^\theta(\cdot)$ is a true martingale if and only if
there exists an ELMM $\Q$, under which the stock price processes are local martingales.
The question of whether
$\Q$ is a martingale measure or only a local martingale measure is not connected to whether $Z^\theta(\cdot)$ is a strict local
or a true martingale.  A \emph{bubble} is usually defined within a model in which $Z^\theta(\cdot)$ is a true martingale. Then,
a wealth process is said to have a bubble if it is a strict local martingale under an ELMM.\footnote{In the bubbles literature, there has
been an alternative definition, based upon the characterization of the pricing operator as a
finitely additive measure. It can be shown that this characterization is equivalent to the one here;
see \citet{JP_incomplete}, Section~8 for the proof and literature which relies on
this alternative characterization.}
\citet{JP_complete, JP_incomplete}  suggest replacing the NFLVR condition by
 the stronger condition of ``no dominance'' first proposed by \citet{Merton} to exclude bubbles.
Here, we take the opposite approach. Instead of imposing a new condition, the goal of this analysis is to investigate
a general class of models and study how much can be said in this more general
framework without having the tool of an ELMM.

We observe that the existence of a square-integrable MPR implies the
existence of a Markovian square-integrable MPR. To see this, we
define $\theta(\cdot, \cdot) := \sigma\tr(\cdot, \cdot)
(\sigma(\cdot, \cdot)\sigma\tr(\cdot, \cdot))^\dagger \mu(\cdot, \cdot)$, where $\dagger$ denotes the Moore-Penrose pseudo-inverse of a matrix.
Given the existence of any MPR, we know from the theory of least-squares estimation
that $\theta(\cdot, \cdot)$ is also a MPR. Furthermore, we have $\|\theta(t,S(t))\|^2 \leq \|\nu(t)\|^2$ for all
$t \in [0,T]$ almost surely for any MPR $\nu(\cdot)$, which yields the square-integrability of $\theta(\cdot, \cdot)$.
This observation has been pointed out to us by a referee.

The next proposition shows that any square-integrable Markovian MPR
maximizes the random variable which will later be a candidate for
a hedging price. We denote by $\mathcal{F}^S(\cdot)$ the augmented
filtration generated by the stock price process. We emphasize that the next result only holds so long as the ``terminal
payoff'' $M$ is $\mathcal{F}^S(T)$-measurable.
\begin{prop}[Role of Markovian MPR] \label{P independence}
    Let $M \geq 0$ be a random variable
    measurable with respect to $\mathcal{F}^S(T) \subset \mathcal{F}(T)$. Let $\nu(\cdot)$ denote any square-integrable MPR and
    $\theta(\cdot, \cdot)$ any
    Markovian square-integrable MPR. Then, with
    \begin{align*}
        M^{\nu}(t) := \E\left[\left.\frac{Z^{\nu}(T)}{Z^{\nu}(t)} M \right|\mathcal{F}_t\right]
        \text{ and }
       M^{\theta}(t) := \E\left[\left.\frac{Z^{\theta}(T)}{Z^{\theta}(t)} M \right|\mathcal{F}_t\right]
    \end{align*}
    for $t \in [0,T]$, where we take the right-continuous modification\footnote{See \citet{KS1}, Theorem~1.3.13.} for each process, we have
    $M^\nu(\cdot) \leq M^\theta(\cdot)$ almost surely.  Furthermore, if both $Z^\nu(\cdot)$ and $Z^\theta(\cdot)$
    are $\mathcal{F}^S(T)$-measurable, then $Z^\nu(T) \leq Z^\theta(T)$ almost surely.
\end{prop}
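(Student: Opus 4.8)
The plan is to decompose an arbitrary MPR relative to the minimal-norm one and to exploit the resulting orthogonality. Set $\psi(t) := \nu(t) - \theta(t,S(t))$. Since $\theta(t,\cdot) = \sigma\tr(t,\cdot)(\sigma\sigma\tr(t,\cdot))^\dagger\mu(t,\cdot)$ is the least-squares solution of $\sigma(t,\cdot)x = \mu(t,\cdot)$, it lies in the range of $\sigma\tr(t,\cdot)$, the orthogonal complement of $\ker\sigma(t,\cdot)$; as $\nu(t)$ solves the same linear system, we get $\sigma(t,S(t))\psi(t) = 0$ and $\psi(t)\perp\theta(t,S(t))$, almost surely for almost every $t$. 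First I would record the consequences: $\int_0^T\|\psi(u)\|^2\,du \le \int_0^T\|\nu(u)\|^2\,du < \infty$ a.s.; the cross term $\int_0^{\cdot}\theta\tr(u,S(u))\psi(u)\,du$ vanishes, so that $Z^\nu(\cdot) = Z^\theta(\cdot)\,N^\psi(\cdot)$ with $N^\psi(t) := \exp\!\big(-\int_0^t\psi\tr(u)\,dW(u) - \frac12\int_0^t\|\psi(u)\|^2\,du\big)$ a nonnegative local martingale, $N^\psi(0) = 1$; and $Z^\theta(\cdot)$ is adapted to $\mathcal{F}^S(\cdot)$, because $\theta\tr(u,S(u))\,dW(u) = \mu\tr(u,S(u))(\sigma\sigma\tr(u,S(u)))^\dagger\big(\sigma(u,S(u))\,dW(u)\big)$ and $\sigma(u,S(u))\,dW(u)$ is the local-martingale part of $d\log S(u)$, so that $\int_0^t\theta\tr(u,S(u))\,dW(u)$ is a functional of $S$ on $[0,t]$; in particular $Z^\theta(T)M$ is $\mathcal{F}^S(T)$-measurable.

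Next I would reduce the statement to a single inequality. Fix $t$. Substituting $Z^\nu = Z^\theta N^\psi$ and using that $Z^\theta(t) > 0$ is $\mathcal{F}^S(t)$-measurable while $N^\psi(t) > 0$ is $\mathcal{F}(t)$-measurable, the claim $M^\nu(t) \le M^\theta(t)$ becomes, after clearing denominators,
\[
    \E\!\left[\,Z^\theta(T)\,N^\psi(T)\,M \,\middle|\, \mathcal{F}(t)\right] \;\le\; N^\psi(t)\,\E\!\left[\,Z^\theta(T)\,M \,\middle|\, \mathcal{F}(t)\right].
\]
Conditioning first on $\mathcal{G}(t) := \mathcal{F}(t) \vee \mathcal{F}^S(T) \supseteq \mathcal{F}(t)$ and pulling out the $\mathcal{F}^S(T)$-measurable nonnegative factor $Z^\theta(T)M$, this follows from the key estimate, call it $(\star)$:
\[
    \E\!\left[\,N^\psi(T) \,\middle|\, \mathcal{F}(t)\vee\mathcal{F}^S(T)\,\right] \;\le\; N^\psi(t) \qquad \text{a.s.}
\]

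The core of the argument is $(\star)$: adjoining the terminal $\sigma$-field $\mathcal{F}^S(T)$ must not destroy the supermartingale property of $N^\psi$. This is plausible because the continuous local martingale $\int_0^{\cdot}\psi\tr(u)\,dW(u)$ driving $N^\psi$ is orthogonal to the local-martingale part of $\log S(\cdot)$ --- its covariation with $(\log S)_i$ is $\int_0^{\cdot}(\sigma(u,S(u))\psi(u))_i\,du \equiv 0$ --- hence ``orthogonal to the filtration'' $\mathcal{F}^S$. I would make this rigorous in one of two ways: (i) a Knight/Dambis--Dubins--Schwarz argument turning the orthogonal continuous local martingales $\int\psi\tr\,dW$ and the martingale part of $\log S$ into independent Brownian motions, so that $\mathcal{F}^S(T)$ is conditionally independent of $N^\psi(T)$ given $\mathcal{F}(t)$ and $(\star)$ collapses to the ordinary supermartingale inequality for the nonnegative local martingale $N^\psi$; or (ii) the Kunita--Watanabe decomposition $Y(\cdot) := \E[Z^\theta(T)M\mid\mathcal{F}(\cdot)] = Y(0) + \int_0^{\cdot}\varrho\tr(u)\,dW(u) + Y^\perp(\cdot)$ of the closing martingale, noting that the $\mathcal{F}^S$-measurability of $Z^\theta(T)M$ forces $\varrho(u)\in\mathrm{range}(\sigma\tr(u,S(u)))$, hence $\varrho(u)\perp\psi(u)$ and $\langle N^\psi, Y^\perp\rangle \equiv 0$, so that $N^\psi Y$ is a nonnegative local martingale, hence a supermartingale --- which is exactly the reduced inequality above. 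Either way I obtain $M^\nu(t) \le M^\theta(t)$ for each $t$; letting $t$ range over a countable dense set and appealing to right-continuity of the chosen modifications gives the inequality for all $t\in[0,T]$ simultaneously, a.s. Finally, if $Z^\nu(T)$ and $Z^\theta(T)$ are both $\mathcal{F}^S(T)$-measurable, then so is $N^\psi(T) = Z^\nu(T)/Z^\theta(T)$ (recall $Z^\theta(T) > 0$), and $(\star)$ at $t = 0$ reads $N^\psi(T) = \E[N^\psi(T)\mid\mathcal{F}^S(T)\vee\mathcal{F}(0)] \le N^\psi(0) = 1$, i.e. $Z^\nu(T) \le Z^\theta(T)$ a.s.

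The hard part will be $(\star)$: since zero covariation of continuous local martingales is strictly weaker than independence, the passage from ``the extra noise in $\nu$ is orthogonal to $S$'' to ``$N^\psi$ remains a supermartingale after the enlargement by $\mathcal{F}^S(T)$'' genuinely requires the Knight/Dambis--Dubins--Schwarz (or Kunita--Watanabe, with a predictable-representation fact for $\mathcal{F}^S$) machinery and cannot be obtained by a na\"ive conditioning; one must also handle with care the localization of $N^\psi$, which is a priori only a local martingale.
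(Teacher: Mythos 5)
Your setup coincides with the paper's: both start from $\psi(\cdot)=\nu(\cdot)-\theta(\cdot,S(\cdot))$ with $\sigma(\cdot,S(\cdot))\psi(\cdot)\equiv 0$, the factorization $Z^\nu = Z^\theta N^\psi$, and the observation that $Z^\theta(T)M$ is a functional of the path of $S$. Your reduction of the claim to the single inequality $(\star)$ is also valid. But $(\star)$ is the entire content of the proposition, and neither of your two routes actually establishes it. Route (i) fails as stated: Knight's theorem makes the \emph{DDS Brownian motions} $\beta$ and $\gamma$ of $\int\psi\tr dW$ and of the martingale part of $\log S$ independent, but $N^\psi(T)=\exp(-\beta_{A_T}-A_T/2)$ with $A_T=\int_0^T\|\psi(u)\|^2du$ a \emph{random} clock that in general depends on the path of $S$ (e.g.\ $\psi(u)=f(S(u))e$ with $e\in\ker\sigma$ makes $A_T$ itself $\CF^S(T)$-measurable). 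So $N^\psi(T)$ is $\beta$ read off at an $S$-dependent time, and conditional independence of $N^\psi(T)$ and $\CF^S(T)$ given $\CF(t)$ is simply false in general; what you would really need is that $\beta$ remains a Brownian motion (or $\exp(-\beta_a-a/2)$ remains a supermartingale) under the initial enlargement by $\CF^S(T)$, which is precisely the unproved point. Route (ii) is closer to workable, but the assertion that $\CF^S(T)$-measurability of $Z^\theta(T)M$ forces $\varrho(u)\in\mathrm{range}(\sigma\tr(u,S(u)))$ is exactly the predictable representation property for the filtration $\CF^S$ (plus the immersion $\E[\cdot\,|\,\CF(u)]=\E[\cdot\,|\,\CF^S(u)]$ for $S$-functionals); PRP for $\CF^S$ is a genuine theorem (Jacod--Yor, requiring well-posedness/extremality of the martingale problem for $S$), not a consequence of measurability, and it is precisely the tool the paper announces it does \emph{not} want to rely on. You correctly flag $(\star)$ as the hard part, but flagging it is not closing it.

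The paper closes this step by a quite different device that avoids both enlargement of filtration and any representation theorem: it truncates $c:=\psi$ at the stopping times $\tau_n=T\wedge\inf\{t:\int_0^t c^2\,ds\ge n\}$, so that Novikov makes $Z^{c^n}$ a true martingale; Girsanov then gives a measure $\Q^n$ under which $W^n=W+\int_0^\cdot c^n\,du$ is a Brownian motion, and --- the key point --- since $\sigma(\cdot,S(\cdot))c^n(\cdot)\equiv 0$, the process $S$ has the \emph{same law} under $\Q^n$ as under $\Prob$, so the $\Q^n$-conditional expectation of the functional $\exp(-\int_t^T\theta\tr dW^n-\frac12\int_t^T\|\theta\|^2du)\,M$ of $S$ equals the corresponding $\Prob$-expectation, namely $M^\theta(t)$; Fatou's lemma along $n\to\infty$ then yields $M^\nu(t)\le M^\theta(t)$. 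If you want to salvage your route (ii), you must either prove the PRP for $\CF^S$ under the paper's hypotheses or replace that step by the paper's change-of-measure argument; as written, the proof has a gap at its central step.
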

\begin{proof}
    Due to the right-continuity of $M^{\nu}(\cdot)$ and $M^{\theta}(\cdot)$  it suffices to show for all $t \in [0,T]$ that
    $M^\nu(t) \leq M^\theta(t)$ almost surely.
    We define  $c(\cdot) := \nu(\cdot)  - \theta(\cdot, S(\cdot))$.  For the sequence of stopping times
    \begin{align*}
         \tau_n := T \wedge \inf \left\{t \in [0,T]: \int_0^t c^2(s) ds \geq n\right\},
    \end{align*}
    where $n\in \N$, we set $c^n(\cdot) := c(\cdot) \1_{\{\tau_n \geq \cdot\}}$
    and observe that
    \begin{align*}
       \frac{Z^\nu(T)}{Z^\nu(t)}  &= \frac{Z^{c}(T)}{Z^{c}(t)}
            \cdot \exp\left(-\int_t^T \theta\tr(u, S(u)) (dW(u) + c(u) du) - \frac{1}{2} \int_t^T \|\theta(u, S(u))\|^2 du\right)\\
        &= \lim_{n \rightarrow \infty} \frac{Z^{c^n}(T)}{Z^{c^n}(t)}
            \cdot \exp\left(-\int_t^T \theta\tr(u, S(u)) (dW(u) + c^n(u) du) - \frac{1}{2} \int_t^T \|\theta(u, S(u))\|^2 du\right)
    \end{align*}
    with $Z^{c}(\cdot)$ and $Z^{c^n}(\cdot)$ defined as in \eqref{E Z}.
    The limit holds almost surely since both $v(\cdot)$ and $\theta(\cdot, \cdot)$ are square-integrable, which again
    yields the square-integrability of $c(\cdot)$.
    Since $\int_0^T {c^{n^2}}(t) dt \leq n$, Novikov's Condition \citep[see][Proposition~3.5.12]{KS1} yields that
    $Z^{c^n}(\cdot)$ is a martingale.
    Now, Fatou's lemma, Girsanov's theorem and
    Bayes' rule \citep[see][Chapter~3.5]{KS1} yield
    \begin{equation}  \label{E inequality M}
       M^\nu(t) \leq \liminf_{n \rightarrow \infty} \E^{\Q^n}\left[\left.
            \exp\left(-\int_t^T \theta\tr(u, S(u)) dW^n(u) - \frac{1}{2} \int_t^T \|\theta(u, S(u))\|^2 du\right) M \right|\mathcal{F}_t\right],
    \end{equation}
    where $d\Q^n(\cdot) := Z^{c^n}(T) d\Prob(\cdot)$ is a probability measure, $\E^{\Q^n}$ its
    expectation operator,
     and $W^n(\cdot) := W(\cdot) + \int_0^\cdot c^n(u) du$ a $K$-dimensional $\Q^n$-Brownian motion.
    Since $\sigma(\cdot,S(\cdot)) c^n(\cdot) \equiv 0$ we can replace $W(\cdot)$ by $W^n(\cdot)$ in
    \eqref{market}. This yields that the process $S(\cdot)$ has the same dynamics under $\Q^n$ as under $\Prob$.
    Furthermore, both $\theta(\cdot, S(\cdot))$ and $M$ have, as functionals of $S(\cdot)$, the same
    distribution under  $\Q^n$ as under $\Prob$.
    Therefore, we can replace the expectation operator $\E^{\Q^n}$ by $\E$ and the Brownian motion $W^n(\cdot)$ by $W(\cdot)$
    in \eqref{E inequality M} and
    obtain the first part of the statement.  The last inequality of the statement follows from setting
    $M = \1_{\{Z^\nu(T) > Z^\theta(T)\}}$ and observing that $M$ must equal zero almost surely.
\end{proof}
We remark that  the
inequality $M^\nu(\cdot) \leq M^\theta(\cdot)$ can be strict.  For  an example, choose $M = 1$ and a market with
    one stock and two Brownian motions, to wit,
    $d=1$ and $K=2$. We set $\mu(\cdot, \cdot) \equiv 0$, $\sigma(\cdot, \cdot) \equiv (1, 0)$ and observe
    that $\theta(\cdot, S(\cdot)) \equiv (0, 0)\tr$ is a
    Markovian MPR. Another MPR $\nu(\cdot) \equiv (\nu_1(\cdot), \nu_2(\cdot))\tr$ is defined
    via $\nu_1(\cdot) \equiv 0$, the stochastic differential equation
    $
        d\nu_2(t) = -\nu_2^2(t) dW_2(t)
    $
    for all $t \in [0,T]$ and $\nu_2(0)=1$. That is, $\nu_2(\cdot)$ is the reciprocal of a three-dimensional Bessel process
    starting at one. Since $Z^\nu(\cdot)$ also satisfies the stochastic differential equation $dZ^\nu(t) = -Z^\nu(t) \nu_2(t) dW_2(t)$
    we have from \citet{JacodS}, Theorem~1.4.61 that $Z^\nu(\cdot) \equiv \nu_2(\cdot)$, which is a strict local martingale \citep[see][Exercise~3.3.36]{KS1},
    and thus
    $M^\nu(0) = \E[Z^\nu(T)] < 1 = \E[Z^\theta(T)] = M^\theta(0)$.

Under the assumption that an ELMM exists, \citet{Jacka}, Theorem~12, \citet{AS_couverture}, Theorem~3.2
or \citet{DS_numeraire}, Theorem~16 show that a contingent claim can be hedged if and only if the supremum over all
expectations of the terminal value of the contingent claim under all ELMMs is a maximum.
In our setup, we also observe that the supremum over all $M^{\tilde{\nu}}(0)$ in the last proposition is a maximum,
attained by  any
Markovian MPR. Indeed, we will prove in Theorem~\ref{T Respresentation markovian} that, under weak
analytic assumptions, claims of the form $M = p(S(T))$ can be hedged. The general theory lets us conjecture that all
claims measurable with respect to $\CF^S(T)$ can be hedged.

As pointed out by Ioannis Karatzas in a personal communication (2010), Proposition~\ref{P independence} might be related to the
``Markovian selection results,'' as in \citet{Krylov_Markov_Selection}, \citet{EK_Markovian}, Section~4.5, and
\citet{SV_multi}, Chapter~12.
There, the existence of a Markovian solution for a martingale problem is studied.
It is observed that a supremum over a set of expectations indexed by a
family of distributions is attained and the maximizing distribution is a Markovian solution of the martingale problem.
This potential connection needs to be worked out in a future research project.

From this point forward, we shall always assume the MPR to be Markovian. As we shall see, this choice will
lead directly to the optimal trading strategy.
\section{OPTIMAL STRATEGIES} \label{S optimal strategies}
In this section, we show that delta hedging provides the optimal trading strategy in terms
    of minimal required initial capital
    to replicate a given terminal payoff. Next, we prove a modified put-call parity.
    In order to ensure the existence of the delta hedge,
    we derive sufficient conditions for the differentiability of expectations indexed over
    the initial market configuration.

We will rely on the following notation.
If $Y$ is a nonnegative $\CF(T)$-measurable random variable such that $\E[Y|\CF(t)]$ is a function of $t$ and $S(t)$ for
all $t \in [0,T]$, we use the Markovian structure of $S(\cdot)$ to denote conditioning
on the event $\{S(t) = s\}$ by $\E^{t,s}[Y]$.
Outside of the expectation operator we denote
by $(S^{t,s}(u))_{u \in [t,T]}$ a stock price process with the dynamics of \eqref{market} and
$S(t) = s$, in particular, $S^{0,S(0)}(\cdot) \equiv S(\cdot)$.
 We observe that $Z^\theta(u)/Z^\theta(t)$  depends for $u \in (t, T]$ on $\mathcal{F}(t)$ only through
$S(t)$ and we write similarly $(\tilde{Z}^{\theta, t,s}(u))_{u \in [t,T]}$ for
$(Z^\theta(u)/Z^\theta(t))_{u \in [t,T]}$ with $\tilde{Z}^{\theta, t,s}(t) = 1$
on the event $\{S(t) = s\}$.  When we want to stress the dependence of a process on the
state $\omega \in \Omega$ we will write, for example, $S(t,\omega)$.

Let us denote by $\text{supp}(S(\cdot))$ the support of $S(\cdot)$, that is, the smallest closed set in $[0,T] \times \R^n$ such that
\begin{align*}
    \Prob( (t,S(t)) \in \text{supp}(S(\cdot)) \text{ for all } t \in [0,T]) = 1.
\end{align*}
We call $\text{i-supp}(S(\cdot))$ the union of $(0,S(0))$ and the interior of $\text{supp}(S(\cdot))$ and assume that
\begin{align*}
    \Prob( (t,S(t)) \in \text{i-supp}(S(\cdot)) \text{ for all } t \in [0,T)) = 1.
\end{align*}
This assumption is made to exclude degenerate cases, where $S(\cdot)$ can hit the boundary of its support with positive probability.
We shall call any $(t,s) \in \text{i-supp}(S(\cdot))$ a \emph{point of support for $S(\cdot)$} and we remark that
each such point $(t,s)$ satisfies $t < T$.  For example, if $S(\cdot)$ is a one-dimensional geometric Brownian motion
then the set of points of support for $S(\cdot)$ is exactly $(0,S(0)) \cup \{(t,s) \in (0,T) \times \R_+\}$.


We define for any measurable function $p: \R^d_+ \rightarrow [0,\infty)$ a candidate
$h^p: [0,T] \times \R^d_+ \rightarrow [0,\infty)$ for the
hedging price of the corresponding European option:
\begin{equation}  \label{D g}
    h^p(t,s) := \E^{t,s}\left[\tilde{Z}^{\theta, t,s}(T) p(S(T))\right].
\end{equation}
Since $S(\cdot)$ is Markovian, $h^p$ is well-defined.
Proposition~\ref{P independence} yields that $h^p$ does not depend on the choice of
the (Markovian) MPR $\theta(\cdot)$.
Equation~\eqref{D g} has appeared
as the ``real-world pricing formula'' in the BA; compare \citet{PH}, Equation~(9.1.30).
Simple examples for payoffs could be the market portfolio
($\tilde{p}(s) = \sum_{i=1}^d s_i$), the money market ($p^0(s) = 1$), a stock ($p^1(s) = s_1$), or a call
($p^C(s) = (s_1 - L)^+$ for some $L \in \R$). We can now prove the first main result, which in particular provides
a mechanism for pricing and hedging contingent claims under the BA.
We denote by $D_i$, $D^2_{i,j}$ the partial derivatives with respect to the variable $s$.
\begin{thm}[Markovian representation for non path-dependent European claims] \label{T Respresentation markovian}
    Assume that we have a contingent claim of the form
    $p(S(T)) \geq 0$ and that the function $h^p$ of \eqref{D g} is sufficiently
    differentiable or, more precisely, that for all points of support $(t,s)$ for $S(\cdot)$ we have $h^p \in
    C^{1,2}(\mathcal{U}_{t,s})$ for some neighborhood $\mathcal{U}_{t,s}$ of $(t,s)$.
    Then, with
    \begin{equation*}
        \eta^p_i(t,s) := D_i h^p(t,s)
    \end{equation*}
    for all $i = 1, \ldots, d$ and $(t,s) \in [0,T] \times \R^d_+$,
    and with $v^p := h^p(0,S(0))$,
    we get
    \begin{equation*}
        V^{v^p,\eta^p}(t) = h^p(t,S(t))
    \end{equation*}
    for all $t \in [0,T]$.
    The strategy $\eta^p$ is optimal in the sense that for any $\tilde{v} > 0$ and for any strategy
    $\tilde{\eta}$ whose associated wealth process is nonnegative and satisfies
    $V^{\tilde{v}, \tilde{\eta}}(T) \geq p(S(T))$ almost surely,
    we have $\tilde{v} \geq v^p$.
    Furthermore, $h^p$ solves the PDE
    \begin{equation} \label{E PDE h}
        \frac{\partial}{\partial t} h^p(t,s) + \frac{1}{2} \sum_{i=1}^d \sum_{j=1}^d s_i s_j a_{i,j}(t,s)
            D_{i,j}^2 h^p(t,s) = 0
    \end{equation}
    at all points of support $(t,s)$ for $S(\cdot)$.
\end{thm}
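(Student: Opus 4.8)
The plan is to route everything through the process $N(t):=Z^\theta(t)\,h^p(t,S(t))$, which serves as the bridge between the candidate price $h^p$ and genuine wealth processes. The first step is to identify $N$ as a martingale: by the Markov property of $S(\cdot)$ and the fact (noted just before the theorem) that $Z^\theta(T)/Z^\theta(t)$ and $p(S(T))$ depend on $\CF(t)$ only through $S(t)$, conditioning on $\CF(t)$ coincides with conditioning on $\{S(t)=s\}$, so $N(t)=\E\bigl[Z^\theta(T)p(S(T))\mid\CF(t)\bigr]$. Since $h^p$ is $C^{1,2}$ near the support point $(0,S(0))$, we have $\E[Z^\theta(T)p(S(T))]=h^p(0,S(0))<\infty$, so $N$ is a uniformly integrable martingale closed by $N(T)=Z^\theta(T)p(S(T))$; in particular $N(t)\to Z^\theta(T)p(S(T))$ as $t\uparrow T$.

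The second step is to differentiate $N$ and read off the PDE. For fixed $t<T$ the path $\{(u,S(u)):0\le u\le t\}$ is compact and contained in $\text{i-supp}(S(\cdot))$, hence in a finite union of the open sets $\mathcal U_{t',s'}$ on which $h^p\in C^{1,2}$; so Itô's formula applies to $h^p(u,S(u))$ on $[0,t]$, and, $t<T$ being arbitrary, the decomposition holds on $[0,T)$. Using $dZ^\theta(t)=-Z^\theta(t)\,\theta\tr(t,S(t))\,dW(t)$, the product rule, and the identity $\mu=\sigma\theta$ — by which the cross-variation of $Z^\theta$ and $h^p(\cdot,S(\cdot))$ exactly cancels the $\sum_i D_ih^p\,S_i\mu_i$ drift produced by $dh^p(\cdot,S(\cdot))$ — one gets
\begin{equation*}
    dN(t)=Z^\theta(t)\Bigl(\tfrac{\partial}{\partial t}h^p(t,S(t))+\tfrac12\sum_{i=1}^d\sum_{j=1}^d S_i(t)S_j(t)\,a_{i,j}(t,S(t))\,D^2_{i,j}h^p(t,S(t))\Bigr)\,dt+dL(t),
\end{equation*}
with $L$ a local martingale. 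Since $N$ is a martingale its finite-variation part vanishes, and because $Z^\theta>0$ the bracket is $0$ for $d\Prob\times dt$-a.e.\ $(t,\omega)$; a standard argument using the assumption that $(t,S(t))$ stays in $\text{i-supp}(S(\cdot))$ for $t<T$ and the maximality of the support then lifts this to the pointwise PDE~\eqref{E PDE h} at every point of support.

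The third step is replication. Put $\eta^p_i:=D_ih^p$, $v^p:=h^p(0,S(0))$, and $X(t):=Z^\theta(t)\bigl(V^{v^p,\eta^p}(t)-h^p(t,S(t))\bigr)$. The analogous product-rule computation applied to any self-financing wealth process shows that $Z^\theta(\cdot)V^{v,\eta}(\cdot)$ is always a local martingale; feeding in the PDE, one finds that $X$ has zero drift, i.e.\ $dX(t)=-X(t)\,\theta\tr(t,S(t))\,dW(t)$ on $[0,T)$, with $X(0)=v^p-h^p(0,S(0))=0$. By uniqueness for this linear SDE, $X\equiv0$, so $V^{v^p,\eta^p}(t)=h^p(t,S(t))$ for $t<T$; since $h^p(t,S(t))=N(t)/Z^\theta(t)\to p(S(T))$ as $t\uparrow T$ and the wealth process is continuous, the identity extends to $t=T$ with terminal value $p(S(T))$, and one checks that $\eta^p$ is admissible ($V^{v^p,\eta^p}=h^p(\cdot,S(\cdot))\ge0$, $\eta^p(\cdot)$ is continuous on $[0,T)$, and $\int_0^\cdot\eta^p\,dS$ converges as $t\uparrow T$, giving the integrability condition). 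Optimality is then immediate: if $V^{\tilde v,\tilde\eta}(\cdot)\ge0$ and $V^{\tilde v,\tilde\eta}(T)\ge p(S(T))$ a.s., then $Z^\theta(\cdot)V^{\tilde v,\tilde\eta}(\cdot)$ is a nonnegative local martingale, hence a supermartingale, whence $\tilde v=Z^\theta(0)V^{\tilde v,\tilde\eta}(0)\ge\E[Z^\theta(T)V^{\tilde v,\tilde\eta}(T)]\ge\E[Z^\theta(T)p(S(T))]=v^p$.

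The drift computation and the supermartingale bound are routine; the real difficulty sits in Steps~1--2. The main obstacle I anticipate is reconciling the weak regularity hypothesis — $h^p$ is $C^{1,2}$ only near points of support and, in general, not regular up to $t=T$ — with running Itô's formula and then passing to the limit $t\uparrow T$ so that $\eta^p$ is genuinely admissible and delivers exactly $p(S(T))$; this is where the support assumption on $S(\cdot)$ and the uniform integrability/closure of the martingale $N$ must be used with care, and where the author's separate sufficient conditions for differentiability of $h^p$ enter.
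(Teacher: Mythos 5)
Your proof is correct and follows essentially the same route as the paper: the martingale $N(t)=Z^\theta(t)h^p(t,S(t))$, a localized It\^o/product-rule computation whose vanishing drift (together with $\mu=\sigma\theta$ and $Z^\theta>0$) yields PDE~\eqref{E PDE h}, and the supermartingale bound on $Z^\theta(\cdot)V^{\tilde v,\tilde\eta}(\cdot)$ for optimality. The only cosmetic difference is that you establish $V^{v^p,\eta^p}\equiv h^p(\cdot,S(\cdot))$ via uniqueness for the linear SDE satisfied by $Z^\theta(\cdot)\bigl(V^{v^p,\eta^p}(\cdot)-h^p(\cdot,S(\cdot))\bigr)$, whereas the paper reads the identity $dh^p(t,S(t))=\sum_{i=1}^d D_i h^p(t,S(t))\,dS_i(t)=dV^{v^p,\eta^p}(t)$ directly off It\^o's formula combined with the PDE.
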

\begin{proof}
    Let us start by defining the martingale $N^p(\cdot)$ as
    \begin{equation*}
        N^p(t) :=  \E[Z^\theta(T) p(S(T))|\CF(t)] =  Z^\theta(t) h^p(t,S(t))
    \end{equation*}
    for all $t \in [0,T]$.
    Although $h^p$ is not assumed to be in $C^{1,2}([0,T) \times \R^d)$ but only to be locally smooth,
    we can apply a localized version of
    It\^o's formula \citep[see for example][Section~IV.3]{RY}  to it.
    Then, the product rule of stochastic calculus can be used to obtain the dynamics of $N^p(\cdot)$.
    Since $N^p(\cdot)$ is a martingale, the corresponding $dt$ term must disappear.
    This observation, in connection with \eqref{E market price} and the positivity of $Z^\theta(\cdot)$,
    yields PDE~\eqref{E PDE h}.
    It\^o's formula, now applied to $h^p(\cdot,S(\cdot))$, and PDE~\eqref{E PDE h} imply
    \begin{align*}
        d h^p(t,S(t)) = \sum_{i=1}^d D_i h^p(t,S(t)) dS_i(t) = d V^{v^p,\eta^p}(t)
    \end{align*}
    for all $t \in [0,T]$.  This yields directly $V^{v^p,\eta^p}(\cdot) \equiv h^p(\cdot, S(\cdot))$.

    Next, we prove optimality.
    Assume we have some initial wealth $\tilde{v}>0$ and some strategy $\tilde{\eta}$ with nonnegative
    associated wealth process such that
    $V^{\tilde{v}, \tilde{\eta}}(T) \geq p(S(T))$ is satisfied almost surely. Then,
    $Z^\theta(\cdot) V^{\tilde{v}, \tilde{\eta}}(\cdot)$ is
    bounded from below by zero, thus a supermartingale. This implies
    \begin{equation*}
        \tilde{v} \geq \E[Z^\theta(T) V^{\tilde{v},\tilde{\eta}}(T)] \geq
         \E[Z^\theta(T) p(S(T))] = \E[Z^\theta(T) V^{v^p,\eta^p}(T)]  = v^p,
    \end{equation*}
    which concludes the proof.
\end{proof}
The last result generalizes
\citet{PH_hedging}, Proposition~3, where the same statement has been shown for a one-dimensional,
complete market with a time-transformed squared Bessel process of dimension four modeling the stock price process.
There are usually several strategies to obtain the same payoff. For example, if the first stock has a bubble, that is,
if $\E[Z^\theta(T) S_1(T)] < S_1(0)$, then one could either delta hedge with initial capital $\E[Z^\theta(T) S_1(T)]$
as the last theorem describes, or hold the stock with initial capital $S_1(0)$.
The last result shows that the delta hedge is optimal in the sense of minimal required
initial capital.
\citet{P_minimal} has suggested calling the fact that an optimal strategy exists the ``Law of the
Minimal Price'' to contrast it to the classical ``Law of the One Price,'' which appears if there is an equivalent
martingale measure.

We would like to emphasize that we have not shown that $\eta^p$ is unique. Indeed, since we have not
excluded the case that two stock prices have identical dynamics this is not necessarily true.
The next remark discusses the fact that we have not assumed the completeness of the market.
\begin{remark}[Completeness of the market]
One remarkable feature of the last theorem is that it does not require the market to be complete.
In particular, at no point have we assumed invertibility or full rank of the volatility matrix $\sigma(\cdot, \cdot)$.
In contrast to
\citet{FK}, we do not rely on the martingale representation theorem here but instead directly derive
a representation for the conditional expectation process of the final wealth $p(S(T))$.
The explanation for this phenomenon is that all relevant sources of risk for hedging are
completely captured by the tradeable stocks.
However, we remind the reader that we live here in a setting in which
the mean rates of return and volatilities do not depend on an extra stochastic factor.
In a ``more incomplete'' model, with jumps or additional risk factors in mean rates of return or
volatilities, this result can no longer be expected to hold.
Furthermore, there is no hope  to be able to hedge all contingent claims of the Brownian motion
$W(T)$. However, $W(T)$ appears in the model only as a nuisance parameter and it is of no economic interest to trade
in it directly.
\end{remark}
In the next remark we discuss PDE~\eqref{E PDE h}.
\begin{remark}[Non-uniqueness of PDE~\eqref{E PDE h}]  \label{R non-uniqueness}
Parabolic PDEs generally do not have unique solutions.
The hedging price for the stock of Example~\ref{Inverse Bessel} in \eqref{E h price iB},
for instance, is one of many
solutions  of polynomial growth for the corresponding Black-Scholes type PDE with terminal condition $p(s) = s$
and  boundary condition $f(t) = 0$.  Another solution is of course $h(t,s) = s$.
The reason for non-uniqueness in this case is the fact that the second-order
coefficient has super-quadratic growth preventing standard theory cannot from being applied; see, for example,
\citet{KS1}, Section~5.7.B.  However, one can show easily that, given that $h^p$ is sufficiently differentiable,
$h^p$ can be characterized as the minimal nonnegative classical solution of PDE~\eqref{E PDE h} with terminal condition
$h^p(T,s) = p(s)$; compare the proof of \citet{FK}, Theorem~1.
\end{remark}

\citet{FKK}, Example~9.2.2
illustrates that the classical put-call parity can fail. However, a modified version holds. An equivalent version
for the situation of an ELMM with possible bubbles has already
been found in \citet{JP_complete}, Lemma~7.
\begin{cor}[Modified put-call parity] \label{C put call}
    For any $L \in \R$ we have the modified put-call parity for the call- and put-options $(S_1(T) - L)^+$
    and $(L - S_1(T))^+$, respectively, with strike price $L$:
    \begin{equation}  \label{E put call}
        \E^{t,s}\left[\tilde{Z}^{\theta, t,s}(T)(L - S_1(T))^+\right] + h^{p^1}(t,s)  =
            \E^{t,s}\left[\tilde{Z}^{\theta, t,s}(T)(S_1(T) - L)^+\right] + L h^{p^0}(t,s),
    \end{equation}
    where $p^0(\cdot) \equiv 1$ denotes the payoff of one monetary unit and
    $p^1(s) = s_1$ the price of the first stock for all $s \in \R^d_+$.
\end{cor}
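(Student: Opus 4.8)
The plan is to reduce \eqref{E put call} to the elementary pointwise identity
\[
(L - x)^+ - (x - L)^+ = L - x, \qquad x \in \R,
\]
applied along the terminal value of the first stock. Fixing $(t,s) \in [0,T]\times\R^d_+$, I would substitute $x = S_1(T)$ and multiply through by the nonnegative random variable $\tilde Z^{\theta,t,s}(T)$ to obtain
\[
\tilde Z^{\theta,t,s}(T)(L - S_1(T))^+ - \tilde Z^{\theta,t,s}(T)(S_1(T) - L)^+ = L\,\tilde Z^{\theta,t,s}(T) - \tilde Z^{\theta,t,s}(T)\,S_1(T).
\]
Everything then hinges on being allowed to take $\E^{t,s}$ of this identity term by term, which requires that each of the four expectations be finite.

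For the integrability check I would reuse exactly the supermartingale argument from the proof of Theorem~\ref{T Respresentation markovian}: both the buy-and-hold strategy in the first stock (with initial capital $s_1$ under $\E^{t,s}$) and the strategy holding one monetary unit have nonnegative associated wealth, so $\tilde Z^{\theta,t,s}(\cdot)S_1^{t,s}(\cdot)$ and $\tilde Z^{\theta,t,s}(\cdot)$ are nonnegative local martingales and hence supermartingales. In the Markovian notation this gives $h^{p^0}(t,s) = \E^{t,s}[\tilde Z^{\theta,t,s}(T)] \le 1$ and $h^{p^1}(t,s) = \E^{t,s}[\tilde Z^{\theta,t,s}(T)S_1(T)] \le s_1$, both finite. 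The two option terms are then dominated using $(L - S_1(T))^+ \le \max(L,0)$ and $(S_1(T) - L)^+ \le S_1(T) + \max(-L,0)$, which yield
\[
\E^{t,s}\!\left[\tilde Z^{\theta,t,s}(T)(L - S_1(T))^+\right] \le \max(L,0)\,h^{p^0}(t,s) < \infty, \qquad \E^{t,s}\!\left[\tilde Z^{\theta,t,s}(T)(S_1(T) - L)^+\right] \le h^{p^1}(t,s) + \max(-L,0)\,h^{p^0}(t,s) < \infty.
\]

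With all four expectations finite, I would apply $\E^{t,s}$ to the displayed identity, split by linearity, and identify $\E^{t,s}[L\,\tilde Z^{\theta,t,s}(T)] = L\,h^{p^0}(t,s)$ and $\E^{t,s}[\tilde Z^{\theta,t,s}(T)S_1(T)] = h^{p^1}(t,s)$ directly from definition \eqref{D g} with $p = p^0$ and $p = p^1$; rearranging the four terms produces \eqref{E put call}. I do not expect any genuine obstacle: the content is a one-line algebraic identity together with a routine integrability verification, and the latter is nothing more than the supermartingale bound already established for Theorem~\ref{T Respresentation markovian}. The only point worth spelling out carefully is precisely that finiteness, since in the absence of an ELMM one cannot fall back on a risk-neutral expectation to justify separating the call and put terms.
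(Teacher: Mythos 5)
Your proposal is correct and takes essentially the same route as the paper, which disposes of the corollary in one line by "the linearity of expectation"; your pointwise identity $(L-x)^+ - (x-L)^+ = L-x$ followed by taking $\E^{t,s}$ term by term is exactly that argument. The only addition is your explicit finiteness check via the supermartingale bounds $h^{p^0}(t,s)\le 1$ and $h^{p^1}(t,s)\le s_1$, which the paper leaves implicit but which is a legitimate (and welcome) piece of care, since splitting the expectation does require that no two terms be simultaneously infinite.
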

\begin{proof}
    The statement follows from the linearity of expectation.
\end{proof}
    Due to Theorem~\ref{T Respresentation markovian}, there exist, under weak differentiability assumptions,
    optimal strategies for the money market,
    the stock $S_1(T)$, the call and the put.
    Thus, the left-hand side of \eqref{E put call} corresponds
    to the sum of the hedging prices of a put and the stock, and the right-hand side corresponds to the
    sum of the hedging prices of a call
    and  $L$ monetary units.
    The difference between this and the classical put-call parity is that the current stock price and the strike $L$
    are replaced by their hedging prices.
    \citet{BKX}, Section~2.2 have recently observed another version. Instead of replacing the current stock
    price by its hedging price, they replace the European call price by the American call price and restore the
    put-call parity this way.

Next, we will provide sufficient conditions under which the function $h^p$ is sufficiently smooth. We shall call
 a function $f:[0,T] \times \R^d_+ \rightarrow \R$ \emph{locally Lipschitz and locally bounded} on $\R^d_+$ if for all
  $s \in \R^d_+$
    the function $t \rightarrow f(t,s)$ is right-continuous with left limits and for all $M>0$ there exists some $C(M) < \infty$
    such that
    \begin{equation*}
        \sup_{\substack{\frac{1}{M} \leq \|y\|, \|z\| \leq M \\ y \neq z}} \frac{|f(t,y) - f(t,z)|}{\|y-z\|} +
            \sup_{\frac{1}{M} \leq \|y\| \leq M} |f(t,y)| \leq C(M)
    \end{equation*}
    for all $t \in [0,T]$.
In particular, if $f$ has continuous partial derivatives, it is locally Lipschitz and locally bounded.
We require several
assumptions in order to show the differentiability of $h^p$ in Theorem~\ref{T diff} below.
\begin{itemize}
  \setlength{\itemsep}{1pt}
  \setlength{\parskip}{0pt}
  \setlength{\parsep}{0pt}
    \item[(A1)] The functions $\theta_k$ and $\sigma_{i,k}$ are for all
    $i = 1, \ldots, d$ and $k = 1, \ldots, K$
    locally Lipschitz and locally bounded.
    \item[(A2)] For all points of support $(t,s)$ for $S(\cdot)$
    there exist some $C>0$ and some neighborhood $\mathcal{U}$ of $(t,s)$  such that
    \begin{equation}  \label{E ineq A2}
        \sum_{i = 1}^d \sum_{j = 1}^d a_{i,j}(u,y) \xi_i \xi_j \geq C \|\xi\|^2
    \end{equation}
    for all $\xi \in \R^d$ and $(u,y) \in \mathcal{U}$.
    \item[(A3)] The payoff function $p$ is chosen so that for all points of support $(t,s)$ for $S(\cdot)$
        there exist some $C>0$ and some neighborhood $\mathcal{U}$ of $(t,s)$  such that $h^p(u,y) \leq C$
        for all $(u,y) \in \mathcal{U}$.
\end{itemize}
If $h^p$ is constant for $\tilde{d} \leq d$ coordinates, say the last ones, Assumption~(A2) can be weakened to
requesting the uniform ellipticity only in the remaining $d-\tilde{d}-1$ coordinates; that is, the sum in
\eqref{E ineq A2} goes only to $d-\tilde{d}-1$ and $\xi \in \R^{d-\tilde{d}-1}$.
Assumption~(A3) holds in particular if $p$ is of linear growth; that is, if
        $p(s) \leq C \sum_{i=1}^d s_i$ for
        some $C>0$ and all $s \in \R^d_+$,    since
        $\tilde{Z}^{\theta,t,s}(\cdot) S_i^{t,s}(\cdot)$ is a nonnegative supermartingale for all $i = 1, \ldots, d$.
We emphasize that the conditions here are weaker than the ones by \citet{FK}, Section~9 for the case of the market portfolio which
can be represented as $p(s) = \sum_{i=1}^d s_i$. In particular, the stochastic integral component in $Z^\theta(\cdot)$
does not present any technical difficulty in our approach.

We proceed in two steps. In the first step we use the theory of stochastic flows to derive continuity of $S^{t,s}(T)$
and $\tilde{Z}^{\phi,t,s}(T)$ in $t$ and $s$.  This theory relies on Kolmogorov's lemma, see, for example,
\citet{Protter}, Theorem~IV.73, and studies continuity of stochastic processes as functions of their initial
conditions. We refer the reader to \citet{Protter}, Chapter~V for
an introduction to and further references for stochastic flows.  We will prove continuity of $S^{t,s}(\cdot)$
and $\tilde{Z}^{\phi,t,s}(\cdot)$ at once and introduce for that the $d+1$-dimensional process
 $X^{t,s, z}(\cdot) := ({S^{t,s}}\tr(\cdot), z \tilde{Z}^{\phi,t,s}(\cdot))\tr$.
\begin{lemma}[Stochastic flow] \label{L flow}
    We fix a point $(t,s) \in [0,T] \times \R^d_+$ so that
    $X^{t,s, 1}(\cdot)$ is strictly positive and
    an $\R^{d+1}_+$-valued process. Then under Assumption~(A1) we have for all sequences $(t_k, s_k)_{k \in \N}
    \subset [0,T] \times \R^d_+$ with
    $\lim_{k \rightarrow \infty} (t_k, s_k) = (t,s)$ that
    \begin{equation*}
        \lim_{k \rightarrow \infty} \sup_{u \in [t,T]} \|X^{t_k,s_k,1}(u) - X^{t,s,1}(u)\| = 0
    \end{equation*}
    almost surely, where we set $X^{t_k,s_k,1}(u) := (s_k\tr,1)\tr$ for $u \leq t_k$.
    In particular, for $K(\omega)$ sufficiently large
    we have that
    $X^{t_k,s_k,1}(u,\omega)$ is strictly positive and $\R^{d+1}_+$-valued for all $k > K(\omega)$ and $u \in [t,T]$.
\end{lemma}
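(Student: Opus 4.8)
\quad
The plan is to recast the assertion as continuity of the solution flow of a single stochastic differential equation in its initial data, and then to pass from the merely locally Lipschitz coefficients at hand to the classical globally Lipschitz flow theory by a random localization. First I would write down the equation solved by $X^{t,s,1}(\cdot)$: abbreviating a generic point of $\R^{d+1}_+$ by $x=(x_1,\ldots,x_{d+1})\tr$, its first $d$ coordinates evolve as in \eqref{market}, with drift $x_i\,\mu_i(u,x_{1:d})$ and diffusion coefficients $x_i\,\sigma_{i,k}(u,x_{1:d})$, while by \eqref{E Z} its last coordinate satisfies $d\big(z\,\tilde Z^{\phi,t,s}(u)\big)=-\,z\,\tilde Z^{\phi,t,s}(u)\,\phi\tr(u,x_{1:d})\,dW(u)$. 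By the market-price-of-risk identity \eqref{E market price} we have $\mu_i=\sum_{k=1}^K\sigma_{i,k}\phi_k$, so under (A1) each $\mu_i$ --- and hence, since products of locally Lipschitz and locally bounded functions are again of that type and each coordinate map $x\mapsto x_i$ is globally Lipschitz, every one of the coefficients above --- is locally Lipschitz and locally bounded on $\R^{d+1}_+$. To handle the time-dependence of the coefficients and the moving initial time $t_k$ at once, I would augment the state by the time variable, so that $(u,X^{t,s,1}(u))$ solves an autonomous system in which the initial time enters only as a component of the initial condition $(t,s,1)$; the convention $X^{t_k,s_k,1}(u):=(s_k\tr,1)\tr$ for $u\le t_k$ is then precisely its canonical constant extension before the initial clock time.

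Next I would fix a countable family of compacts $\mathcal{K}_1,\mathcal{K}_2,\ldots\subset\R^{d+1}_+$ with $\mathcal{K}_n\subset\mathrm{int}\,\mathcal{K}_{n+1}$ and $\bigcup_n\mathcal{K}_n=\R^{d+1}_+$, and for each $n$ choose, using (A1), globally Lipschitz and globally bounded coefficients that coincide with the true drift and diffusion on $[0,T]\times\mathcal{K}_n$; denote by $X^{n}_{t_0,s_0}(\cdot)$ the unique strong solution of the resulting equation started from $(s_0\tr,1)\tr$ at time $t_0$ (and held constant before $t_0$), which exists on all of $[0,T]$. By the theory of stochastic flows for Lipschitz coefficients --- see \citet{Protter}, Chapter~V, whose results rest on Kolmogorov's continuity lemma, \citet{Protter}, Theorem~IV.73 --- the map $(t_0,s_0,u)\mapsto X^{n}_{t_0,s_0}(u)$ has a jointly continuous modification; discarding a single null set I may assume this for every $n$ at once. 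Fix $\omega$ outside this null set: the path $\{X^{t,s,1}(u,\omega):u\in[t,T]\}$ is the continuous image of the compact interval $[t,T]$ and, by hypothesis, lies in the open set $\R^{d+1}_+$, hence (being compact) is contained in $\mathrm{int}\,\mathcal{K}_n$ for some $n=n(\omega)$.

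For this $n$ the $n$-th localized equation agrees with the original one along the limiting trajectory, so pathwise uniqueness gives $X^{t,s,1}\equiv X^{n}_{t,s}$ on $[t,T]$. Joint continuity of the $n$-th flow yields $\sup_{u\in[t,T]}\big\|X^{n}_{t_k,s_k}(u)-X^{n}_{t,s}(u)\big\|\to 0$; since the limiting path stays in the interior of $\mathcal{K}_n$ it has positive distance from $\partial\mathcal{K}_n$, so this convergence forces $X^{n}_{t_k,s_k}(\cdot)$ to remain in $\mathcal{K}_n$ on $[t,T]$ for all $k\ge K(\omega)$, whence once more $X^{n}_{t_k,s_k}\equiv X^{t_k,s_k,1}$ there by uniqueness. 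Combining the two identities with the convergence gives $\sup_{u\in[t,T]}\big\|X^{t_k,s_k,1}(u)-X^{t,s,1}(u)\big\|\to 0$ almost surely, and the strict positivity and $\R^{d+1}_+$-valuedness of $X^{t_k,s_k,1}(\cdot)$ for $k>K(\omega)$ follow because these paths then lie in $\mathcal{K}_n\subset\R^{d+1}_+$. (For small $k$ the path $X^{t_k,s_k,1}$ need not be defined up to $T$ at all; the localized solution always is, and the argument shows the two coincide once $k$ is large, which is all the limit in the statement requires.)

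The step I expect to be the main obstacle is exactly this localization: (A1) supplies only \emph{local} Lipschitz control, so a global flow --- let alone its continuity --- is not available off the shelf, and the delicate point is that the cutoff compact must be read off the trajectory of the \emph{limiting} process and then shown, via the closeness of the localized flows to that trajectory, to trap the approximating paths before they can leave the region on which the coefficients were left unchanged. The handling of the varying initial time $t_k$ through the space--time augmentation, and the null-set bookkeeping over the countable family of cutoffs, are routine by comparison.
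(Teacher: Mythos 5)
Your proof is correct and follows essentially the same route as the paper: both arguments rest on (A1) making the coefficients of the joint process $X$ locally Lipschitz, on Kolmogorov-lemma-based continuity of the stochastic flow in its initial data, and on a localization to compact subsets of $\R^{d+1}_+$ containing the limiting trajectory so as to reduce to the globally Lipschitz case (the paper outsources this truncation to \citet{Protter}, Theorem~V.38, and handles the moving initial time by restarting the flow at $t_k$ and invoking uniformity over the countable set $\{t,t_1,t_2,\ldots\}$, whereas you treat the initial time as a flow parameter and carry out the truncation by hand). The only loose end is that identifying $X^{n}_{t_k,s_k}$ with $X^{t_k,s_k,1}$ on $[t,T]$ also requires containment in $\mathcal{K}_n$ on the short interval between $t_k$ and $t$, which follows from the same uniform convergence since those values stay near $s\in\mathrm{int}\,\mathcal{K}_n$.
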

\begin{proof}
    Since the class of locally Lipschitz and locally bounded functions is closed under summation and multiplication,
    Assumption~(A1) yields that the drift and diffusion coefficients of
    $X^{u,y,z}(\cdot)$ are locally Lipschitz for all $(u,y,z) \in [0,T] \times \R^d_+ \times \R_+$. We start by
    assuming $t_k \geq t$ for all $k \in \N$ and obtain
    \begin{align}
        \sup_{u \in [t,T]} \|X^{t_k,s_k,1}(u) - X^{t,s,1}(u)\| \leq&
            \sup_{u \in [t,t_k]} \|(s_k\tr,1)\tr - X^{t,s,1}(u)\| + \sup_{u \in [t_k,T]} \|X^{t_k,s_k,1}(u) - X^{t_k,s,1}(u)\| \nonumber \\
            &+ \sup_{u \in [t_k,T]} \|X^{t_k,s,1}(u) - X^{t_k,S^{t,s}(t_k), \tilde{Z}^{\phi,t,s}(t_k)}(u)\|  \label{E inequality stoch flow}
    \end{align}
    for all $k \in \N$. The first term on the right-hand side of the last inequality goes to zero as $k$ increases
    by the continuity of the sample paths of $X^{t,s,1}(\cdot)$. The arguments in the proof of \citet{Protter}, Theorem~V.38 yield
    that
    \begin{equation*}
        \lim_{k \rightarrow \infty} \sup_{u \in [\tilde{t},T]} \|X^{\tilde{t},y_k,z_k}(u) - X^{\tilde{t},s,1}(u)\| = 0
    \end{equation*}
    for all $\tilde{t} \in \{t, t_1, t_2, \ldots\}$ and any sequence
    $((y_k\tr,z_k)\tr)_{k \in \N} \subset \R^{d+1}_+$ with $(y_k\tr,z_k)\tr \rightarrow (s\tr,1)\tr$ as $k \rightarrow \infty$
    almost surely.  An analysis of the arguments in \citet{Protter}, Theorems~V.37 and IV.73
    yields that the convergence is uniformly in $\tilde{t} \in \{t, t_1, t_2, \ldots\}$, see also
    \citet{Ruf_ots}, Lemma~1.  We now choose for
    $(y_k\tr,z_k)\tr$ the sequences $(s_k\tr,1)\tr$ and $({S^{t,s}}\tr(t_k,\omega),\tilde{Z}^{\phi,t,s}(t_k,\omega))\tr$
    for all $\omega \in \Omega$.
    This proves the statement if
    $t_k \geq t$ for all $k \in \N$. In the case of the reversed inequality $t_k \leq t$, a small modification of the inequality in
    \eqref{E inequality stoch flow} yields the lemma.
\end{proof}
In the second step, we use techniques from the theory of PDEs to conclude the necessary smoothness of
$h^p$. The following result has been used by Ekstr\"om, Janson and Tysk. We present it here on its own
to underscore the analytic component of our argument.
\begin{lemma}[Schauder estimates and smoothness] \label{L Schauder}
    Fix a point $(t,s) \in [0,T) \times \R^d_+$ and a neighborhood $\mathcal{U}$ of $(t,s)$.
    Suppose Assumption~(A1) holds in conjunction with Inequality~\eqref{E ineq A2} for all $\xi \in \R^d$ and
    $(u,y) \in \mathcal{U}$ and some $C > 0$.
    Let $(f_k)_{k \in \N}$ denote a sequence of solutions of PDE~\eqref{E PDE h} on $\mathcal{U}$,
    uniformly bounded under the supremum norm on $\mathcal{U}$.
    If $\lim_{k \rightarrow \infty} f_k(t,s) = f(t,s)$ on $\mathcal{U}$ for some
    function $f: \mathcal{U} \rightarrow \R$, then $f$ solves PDE~\eqref{E PDE h}
    on some neighborhood $\tilde{\mathcal{U}}$ of $(t,s)$.  In particular, $f \in C^{1,2}(\tilde{\mathcal{U}})$.
\end{lemma}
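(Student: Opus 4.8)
The plan is to prove Lemma~\ref{L Schauder} by combining uniform interior estimates of Schauder type for the backward parabolic operator appearing in \eqref{E PDE h} with a compactness argument in H\"older spaces.

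First I would localize. After shrinking $\mathcal U$, I may assume that $\overline{\mathcal U}$ is a compact subset of $[0,T)\times\R^d_+$ on which $\frac1M\le\|s\|\le M$ for some $M$, on which the ellipticity bound \eqref{E ineq A2} holds, and which contains around $(t,s)$ a cylinder extending toward larger values of $t$ (this is the right direction because \eqref{E PDE h} reads $\partial_t h=-\frac12\sum_{i,j}s_is_j a_{i,j}D^2_{i,j}h$, which is a terminal-value problem, and since $t<T$ such a cylinder fits inside $[0,T)\times\R^d_+$). By (A1) the $\sigma_{i,k}$ are locally Lipschitz and locally bounded, and, as that class is closed under sums and products, so are the entries $a_{i,j}=\sum_k\sigma_{i,k}\sigma_{j,k}$ and hence the second-order coefficients $(t,s)\mapsto\frac12 s_is_j a_{i,j}(t,s)$. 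Consequently, on $\overline{\mathcal U}$ these coefficients are bounded and Lipschitz — a fortiori H\"older with every exponent $\alpha\in(0,1)$ — in the space variable, uniformly in $t$; and the operator carries no first- or zeroth-order term.

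Next I would invoke the interior parabolic estimates used by Ekstr\"om, Janson and Tysk (see \citet{JT} and \citet{ET}): since the operator is uniformly parabolic on $\overline{\mathcal U}$ with coefficients of this regularity, there are $\alpha\in(0,1)$, a smaller neighborhood $\mathcal U'\subset\mathcal U$ of $(t,s)$, and $\Lambda<\infty$ — all depending only on the ellipticity constant, the local bounds on the coefficients, and the geometry — such that every classical solution $v$ of \eqref{E PDE h} on $\mathcal U$ obeys an estimate of the form $\|v\|_{C^{1+\alpha/2,\,2+\alpha}(\mathcal U')}\le\Lambda\|v\|_{L^\infty(\mathcal U)}$. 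Applied to each $f_k$ and combined with the assumed uniform sup-norm bound on the sequence, this makes the families $(f_k)$, $(\partial_t f_k)$, $(D_i f_k)$ and $(D^2_{i,j}f_k)$ uniformly bounded and equicontinuous on $\mathcal U'$. By Arzel\`a--Ascoli I can pass to a subsequence along which $f_{k_m}$, $\partial_t f_{k_m}$, $D_i f_{k_m}$ and $D^2_{i,j}f_{k_m}$ converge uniformly on a neighborhood $\tilde{\mathcal U}\Subset\mathcal U'$ of $(t,s)$; the limit $g$ then belongs to $C^{1,2}(\tilde{\mathcal U})$, and passing to the limit term by term in $\partial_t f_{k_m}+\frac12\sum_{i,j}s_is_j a_{i,j}D^2_{i,j}f_{k_m}=0$ (legitimate since the coefficients are continuous) shows that $g$ solves \eqref{E PDE h} on $\tilde{\mathcal U}$. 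The hypothesis $f_k(t,s)\to f(t,s)$ on $\mathcal U$ forces $g=f$ on $\tilde{\mathcal U}$, which gives both conclusions: $f$ solves \eqref{E PDE h} on $\tilde{\mathcal U}$, and in particular $f\in C^{1,2}(\tilde{\mathcal U})$.

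The main obstacle, and the only step that is more than bookkeeping, is securing the interior estimate quoted above under the weak hypotheses on the coefficients — notably that (A1) only yields right-continuity in the time variable rather than H\"older continuity. This is exactly the analytic input imported from the PDE arguments of Ekstr\"om, Janson and Tysk: one first uses the De Giorgi--Nash--Moser / Krylov--Safonov gain of interior H\"older regularity for solutions of uniformly parabolic equations with merely bounded measurable coefficients (which is insensitive to the poor time-regularity), and then bootstraps this against the Lipschitz-in-space regularity of the coefficients to reach $C^{1,2}$ with uniform bounds. Everything surrounding it — the localization, the compactness extraction, and the term-by-term passage to the limit — is routine.
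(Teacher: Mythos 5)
Your proposal is correct and follows essentially the same route as the paper, which simply points to the arguments of Janson--Tysk and Ekstr\"om--Tysk: interior Schauder-type estimates (Knerr's, which require only spatial regularity of the coefficients and are insensitive to the poor time-regularity allowed by (A1)) applied to the uniformly bounded sequence $(f_k)$, followed by an Arzel\`a--Ascoli extraction and term-by-term passage to the limit in the equation, with the pointwise convergence hypothesis identifying the limit as $f$. Your elaboration of the localization and of why the time-irregularity is harmless is a faithful expansion of exactly the argument the paper delegates to those references.
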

\begin{proof}
    We refer the reader to the arguments and references provided in \citet{JT}, Section~2 and \citet{ET}, Theorem~3.2.
    The central idea is to use the interior Schauder estimates by \citet{Knerr} along with Arzel\`a-Ascoli type of
    arguments to prove the existence of first- and second-order derivatives of $f$.
\end{proof}
We can now prove the smoothness of the hedging price $h^p$.
\begin{thm}  \label{T diff}
    Under Assumptions~(A1)-(A3) there exists for all points of
    support $(t,s)$ for $S(\cdot)$ some neighborhood $\mathcal{U}$ of $(t,s)$ such that
      the function $h^p$ defined in \eqref{D g} is  in $C^{1,2}(\mathcal{U})$.
\end{thm}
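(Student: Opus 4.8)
The plan is to combine the two preceding lemmas in a localization argument around each point of support. Fix a point of support $(t,s)$ for $S(\cdot)$. By Assumption~(A2) and (A3) there is a neighborhood $\mathcal{U}$ of $(t,s)$ on which uniform ellipticity \eqref{E ineq A2} holds and on which $h^p$ is bounded, say $h^p \leq C$ on $\mathcal{U}$. The strategy is to exhibit $h^p$ restricted to (a possibly smaller) $\mathcal{U}$ as a pointwise limit of smooth solutions $f_k$ of PDE~\eqref{E PDE h} that are uniformly bounded on $\mathcal{U}$, and then invoke Lemma~\ref{L Schauder} to conclude $h^p \in C^{1,2}(\tilde{\mathcal{U}})$ on some neighborhood $\tilde{\mathcal{U}} \subset \mathcal{U}$.

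To build the approximating sequence, I would first mollify the data. Using Assumption~(A1), approximate the coefficients $\sigma_{i,k}$ and $\theta_k$ by smooth (say $C^\infty$) functions $\sigma^{(k)}, \theta^{(k)}$ that converge to $\sigma, \theta$ locally uniformly while preserving the local Lipschitz and local boundedness bounds uniformly in $k$, and preserving the ellipticity bound on $\mathcal{U}$ for $k$ large. Simultaneously approximate the payoff $p$ by smooth, compactly supported (or linearly bounded) functions $p_k \to p$. For each $k$ let $S^{(k),t,s}(\cdot)$ and $\tilde Z^{(k),\theta,t,s}(\cdot)$ denote the stock and discount-factor processes driven by the mollified coefficients started from $(t,s)$, and set $h^{(k)}(t,s) := \E^{t,s}[\tilde Z^{(k),\theta,t,s}(T)\, p_k(S^{(k),t,s}(T))]$. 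With smooth coefficients and smooth payoff, classical results (e.g. Friedman, or the Feynman--Kac theory for non-degenerate parabolic equations) give $h^{(k)} \in C^{1,2}$, and by the same computation as in the proof of Theorem~\ref{T Respresentation markovian} each $h^{(k)}$ solves PDE~\eqref{E PDE h} for the mollified coefficients; since $\sigma^{(k)} \to \sigma$, one arranges that $h^{(k)}$ solves \eqref{E PDE h} itself in the limit, or alternatively carries the $k$-dependent coefficients through a mild generalization of Lemma~\ref{L Schauder}. Uniform boundedness of $h^{(k)}$ on $\mathcal{U}$ follows from the linear-growth/boundedness control on $p_k$ together with the supermartingale property of $\tilde Z^{(k)} S_i^{(k)}$, exactly as noted after Assumption~(A3).

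The remaining ingredient is pointwise convergence $h^{(k)}(u,y) \to h^p(u,y)$ for $(u,y) \in \mathcal{U}$. This is where Lemma~\ref{L flow} enters: the stochastic-flow continuity, applied now to convergence of the coefficients rather than of the initial point (the arguments of Protter, Theorems~V.37--V.38 and IV.73 are robust to this), yields $\sup_{u\in[t,T]}\|X^{(k),t,s,1}(u) - X^{t,s,1}(u)\| \to 0$ almost surely; combined with $p_k \to p$ and a uniform-integrability argument (again supplied by the supermartingale bound on $\tilde Z S_i$, which dominates $\tilde Z^{(k)} p_k(S^{(k)})$ up to constants) this gives convergence of the expectations. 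Then Lemma~\ref{L Schauder} applies and delivers $h^p \in C^{1,2}(\tilde{\mathcal{U}})$.

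The main obstacle I expect is the uniform integrability / domination step needed to pass the limit inside the expectation: the discount factors $\tilde Z^{(k),\theta}$ are only local martingales, so one cannot simply bound $\E[\tilde Z^{(k)}(T)]$ by a constant uniformly, and must instead lean on the linear-growth control of $p$ (Assumption~(A3), or the weaker local boundedness of $h^p$) to produce a dominating nonnegative supermartingale uniformly in $k$. A secondary technical point is making the mollification respect both the ellipticity on $\mathcal{U}$ and the growth bounds globally at once, and ensuring the degenerate directions allowed by the weakened form of (A2) (when $h^p$ is constant in some coordinates) are handled by restricting the PDE and the flow to the relevant sub-coordinates before mollifying.
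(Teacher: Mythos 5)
Your high-level architecture (approximate, obtain local solutions, get uniform local bounds from (A3), pass to the limit via the Schauder/Arzel\`a--Ascoli compactness of Lemma~\ref{L Schauder}) matches the paper's, but the way you generate the approximating solutions has a genuine gap. You mollify the coefficients and then assert that ``classical results (e.g.\ Friedman, or Feynman--Kac theory)'' give $h^{(k)} \in C^{1,2}$. Those classical results require global linear-growth or boundedness conditions on the coefficients; local mollification does not change the global growth of $a(\cdot,\cdot)$ and $\theta(\cdot,\cdot)$, and the whole point of this setting is that such growth conditions fail (cf.\ Remark~\ref{R non-uniqueness} and Example~\ref{Inverse Bessel}, where the diffusion coefficient grows super-quadratically and $Z^\theta(\cdot)$ is a strict local martingale). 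If instead you mollify globally so as to force linear growth, you change the law of the process non-locally and lose the identification of the limit with $h^p$ --- and Lemma~\ref{L flow} as stated is a continuity result in the \emph{initial condition} $(t,s)$, not in the coefficients, so it does not deliver the convergence $h^{(k)} \to h^p$ that your scheme needs. The missing ingredient is a result that converts ``continuous, locally bounded expectation of a bounded continuous payoff'' directly into ``local classical solution of \eqref{E PDE h}'' under only local ellipticity and local regularity of the \emph{unmollified} coefficients; the paper gets this from \citet{JT}, Lemma~2.6.

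The paper's actual route keeps the coefficients fixed and approximates only the payoff: it forms the joint payoff $\tilde p(s,z) = z\,p(s)$ in the $(d+1)$-dimensional variable (stock plus discount factor), truncates it at level $M$, and approximates the truncation by continuous functions $\tilde p^{M,m} \le 2M$. Continuity of the resulting expectations $\tilde h^{p,M,m}$ in $(u,y)$ then follows from Lemma~\ref{L flow} (this is where the flow lemma is actually used) together with the \emph{bounded} convergence theorem --- no uniform integrability over the local martingale $\tilde Z^{\theta}$ is ever needed, which is precisely the obstacle you flag as the main difficulty of your own scheme. Then \citet{JT}, Lemma~2.6 makes each $\tilde h^{p,M,m}$ a classical solution of \eqref{E PDE h}, and two successive applications of Lemma~\ref{L Schauder} (first $m\to\infty$ with the uniform bound $2M$, then $M\to\infty$ with the local bound supplied by (A3)) give $h^p \in C^{1,2}(\mathcal{U})$. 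I recommend restructuring your argument along these lines; truncating the product $z\,p(s)$ rather than $p$ alone is the device that dissolves the uniform-integrability problem.
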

\begin{proof}
    We define $\tilde{p}: \R^{d+1}_+ \rightarrow \R_+$ by $\tilde{p}(s_1,\ldots,s_d,z) := z p(s_1, \ldots, s_d)$ and
    $\tilde{p}^M: \R^{d+1}_+ \rightarrow \R_+$ by
    $\tilde{p}^M(\cdot) := \tilde{p}(\cdot) \1_{\{\tilde{p}(\cdot) \leq M\}}$ for some $M>0$
    and approximate $\tilde{p}^M$ by a sequence of continuous functions $\tilde{p}^{M,m}$
    \citep[compare for example][Appendix~C.4]{Evans} such that
    $\lim_{m \rightarrow \infty} \tilde{p}^{M,m} = \tilde{p}^{M}$ pointwise
    and $\tilde{p}^{M,m} \leq 2M$ for all $m \in \N$.  The corresponding expectations are defined as
    \begin{equation*}
        \tilde{h}^{p,M}(u,y) := \E^{u,y}[\tilde{p}^M(S_1(T), \ldots, S_d(T), \tilde{Z}^{\theta, u,y}(T))]
    \end{equation*}
     for all
    $(u,y) \in \tilde{\mathcal{U}}$ for some neighborhood $\tilde{\mathcal{U}}$ of $(t,s)$
    and equivalently $\tilde{h}^{p,M,m}$.

    We start by proving continuity of $\tilde{h}^{p,M,m}$ for large $m$.
    For any sequence  $(t_k, s_k)_{k \in \N} \subset [0,T] \times \R^d_+$ with
    $\lim_{k \rightarrow \infty} (t_k, s_k) = (t,s)$,
    Lemma~\ref{L flow}, in connection with Assumption~(A1), yields
    $$\lim_{k \rightarrow \infty} \tilde{p}^{M,m}(S^{t_k,s_k}(T), \tilde{Z}^{\theta,t_k,s_k}(T))
    = \tilde{p}^{M,m}(S^{t,s}(T), \tilde{Z}^{\theta,t,s}(T)).$$
    The continuity of $\tilde{h}^{p,M,m}$ follows then from the bounded convergence theorem.

    Now, \citet{JT}, Lemma~2.6, in connection with Assumption~(A2), guarantees that $\tilde{h}^{p,M,m}$
    is a solution of
    PDE~\eqref{E PDE h}. Lemma~\ref{L Schauder} then yields that firstly, $\tilde{h}^{p,M}$ and secondly,
    in connection with Assumption~(A3), $h^p$ also solve PDE~\eqref{E PDE h} on some neighborhood $\mathcal{U}$ of
    $(t,s)$. In particular, $h^p$ is in $C^{1,2}(\mathcal{U})$.
\end{proof}
The last theorem is a generalization of the results in \citet{ET} to several dimensions and to non-continuous payoff
functions $p$.  \citet{Friedman_SDE}, Chapters~6 and 15 and \citet{JT} have related results, but they impose linear
growth conditions on $a(\cdot, \cdot)$ so that PDE~\eqref{E PDE h} has a unique solution of polynomial growth. We are
especially interested in the situation in which multiple solutions may exist. \citet{HS} present results in the case
when the process
corresponding to PDE~\eqref{E PDE h} does not leave the positive orthant.  As \citet{FK} observe,
this condition does not necessarily hold if there is no ELMM.
In the case of $Z^\theta(\cdot)$ being a
martingale, our assumptions are only weakly more general than the ones in \citet{HS}
 by not requiring $a(\cdot, \cdot)$ to be continuous in the time dimension.
However, in all these research articles the authors show that the function $h^p$ indeed solves
PDE~\eqref{E PDE h} not only locally but globally and satisfies the corresponding boundary conditions.
We have here abstained from imposing the stronger assumptions
these papers rely on and concentrate on the local properties of $h^p$. For our application it is sufficient to
observe that
$h^p(t, S(t))$ converges to $p(S(T))$ as $t$ goes to $T$; compare the proof of Theorem~\ref{T Respresentation markovian}.

The next section provides an interpretation of our approach to prove the differentiability of $h^p$;
all problems on the spatial boundary, arising for example from a
discontinuity of $a(\cdot, \cdot)$ on the boundary of the positive orthant, have been ``conditioned away,'' so that $S(\cdot)$ can
get close to but never actually attains the boundary.
\section{CHANGE OF MEASURE} \label{S hedging price}
In order to compute optimal strategies we need to compute the ``deltas'' of
expectations.
To simplify the computations we suggest in this section a change of measure under which
the dynamics of the stock price process simplify.

\citet{DS_existence}, Theorem~1.4 show that NA implies the existence of a local martingale measure
absolutely continuous with respect to $\Prob$. On the other side, a consequence of  this section is the existence
of a local martingale measure under NUPBR, such that $\Prob$ is absolutely continuous with respect to it.
Indeed, NA and NUPBR together yield NFLVR \citep[compare][Proposition~3.2]{DS_fundamental,KK}, which again yields an
ELMM corresponding exactly to the one discussed in this section.  Another point of view,
which we do not take here, is the recent insight by
\citet{Kardaras_finitely} on the equivalence of NUPBR and the existence of a
 finitely additive probability measure which is, in some sense, weakly equivalent to $\Prob$ and under which
$S(\cdot)$ has some notion of weak local martingale property.

Our approach via a
``generalized change of measure'' is in the spirit of the work by \citet{F1972},  \citet{M}, \citet{DS_Bessel},
Section~2, and \citet{FK}, Section~7.
They show that for the  strictly positive $\Prob$-local martingale $Z^\theta(\cdot)$
a probability measure $\Q$ exists such that $\Prob$ is absolutely continuous with
respect to $\Q$ and $d\Prob/d\Q = 1/Z^\theta(T \wedge \tau^\theta)$, where $\tau^\theta$ is the first hitting time of zero by
the process $1/Z^\theta(\cdot)$.
Their analysis has been built upon by several authors, for example by \citet{PP}, Section~2.
We complement this research direction by determining the dynamics of the
$\Prob$-Brownian motion $W(\cdot)$ under the new measure $\Q$.
These dynamics do not follow directly from an
application of a Girsanov-type argument since $\Q$ need not be absolutely continuous with respect to $\Prob$.
Similar results for the dynamics have been obtained in \citet{Sin}, Lemma~4.2 and \citet{DShir}, Section~2.
However, they rely on additional assumptions on the existence of solutions for some stochastic differential equations.
\citet{WH_martingale} prove the existence of a measure $\tilde{\Q}$ satisfying
$\E^\Prob[Z^\theta(T)] = \tilde{\Q}(\tau^\theta > T)$, where $W(\cdot)$ has the same $\tilde{\Q}$-dynamics as we
derive, but $\Prob$ is not necessarily absolutely continuous with respect to $\tilde{\Q}$.

For the results in this section, we make the technical assumption
that the probability space $\Omega$ is the space of
right-continuous paths $\omega: [0, T] \rightarrow \R^m \cup
\{\Delta\}$ for some $m \in \N$
 with left limits at $t \in [0,T]$ if $\omega(t) \neq \Delta$
 and with an absorbing ``cemetery'' point
$\Delta$. By that we mean that $\omega(t) = \Delta$ for some $t
\in [0, T]$ implies $\omega(u) = \Delta$ for all $u \in [t, T]$
and for all $\omega \in \Omega$. This point $\Delta$ will
represent explosions of $Z^\theta(\cdot)$, which do not occur
under $\Prob$, but may occur under a new probability measure $\Q$
constructed below. We further assume that the filtration
$\mathbb{F}$ is the right-continuous modification of the
filtration generated by the paths $\omega$ or, more precisely,  by
the projections $\xi_t(\omega) := \omega(t)$. Concerning the
original probability measure we assume that $\Prob(\omega:
\omega(T) = \Delta) = 0$ and that for all $t \in [0, T]$, $\infty$
is an absorbing state for $Z^\theta(\cdot)$; that is, $Z^\theta(t)
= \infty$ implies $Z^\theta(u) = \infty$ for all $u \in [t, T]$.
This assumption specifies $Z^\theta(\cdot)$ only on a set of
measure zero and is made for notational convenience.

We emphasize that we have not assumed completeness of the filtration $\mathbb{F}$. Indeed, we shall construct
a new probability measure $\Q$ which is not necessarily equivalent to the original measure $\Prob$ and can assign positive probability to
nullsets of $\Prob$. If we had assumed completeness of $\mathbb{F}$, we could not guarantee that $\Q$ could be consistently defined
on all subsets of these nullsets, which had been included in $\mathbb{F}$ during the completion process. The fact that we
need the cemetery point $\Delta$ and cannot restrict ourselves to the original canonical space is also not surprising.
The point $\Delta$ represents events which have under $\Prob$ probability zero, but under $\Q$ have positive probability.

All these assumptions are needed to prove the existence of a measure $\Q$ with
$d\Prob/d\Q = 1/Z^\theta(T \wedge \tau^\theta)$.
After having ensured its existence, one then can take the route suggested by \citet{DS_Bessel}, Theorem~5 and
start from any probability space satisfying the usual conditions, construct a canonical probability space satisfying the
technical assumptions mentioned above, doing all necessary computations on this space, and then going back to the
original space.

For now, the goal is to construct a measure $\Q$ under which the computation of $h^p$ simplifies. For that, we define
the sequence of stopping times
\begin{equation*}
    \tau^\theta_i := \inf\{t \in [0,T]: Z^\theta(t) \geq i\}
\end{equation*}
with $\inf \emptyset := \infty$ and the sequence of
$\sigma$-algebras $\CF^{i} := \CF(\tau^\theta_i \wedge T)$ for all $i \in \N$. We observe that the definition of $\CF^{i}$ is
independent of the probability measure and define the stopping time $\tau^\theta := \lim_{i \rightarrow \infty} \tau^\theta_i$
with corresponding $\sigma$-algebra
$\CF^{\infty, \theta} := \CF(\tau^\theta \wedge T)$ generated by $\cup_{i=1}^\infty \CF^{i,\theta}$.

Within this framework,
\citet{M} and \citet{F1972}, Example~6.2.2 rely on an extension theorem \citep[compare][Chapter~5]{Pa} to show
the existence of a measure $\Q$  on $(\Omega, \CF(T))$ satisfying
\begin{equation} \label{D Q 1}
    \Q(A) = \E^{\Prob}\left[Z^\theta(\tau^\theta_i \wedge T) \1_A\right]
\end{equation}
for all $A \in \CF^{i,\theta}$, where we now write $\E^{\Prob}$ for the expectation under the original measure.
We summarize these insights in the following theorem, which
also generalizes
the well-known Bayes' rule for classical changes of measures \citep[compare][Lemma~3.5.3]{KS1}.
\begin{thm}[Generalized change of measure, Bayes' rule] \label{T Gen change}
    There exists a measure $\Q$ such that $\Prob$ is absolutely continuous with respect to $\Q$ and such that
    for  all
    $\CF\left(T\right)$-measurable random variables $Y \geq 0$ we have
    \begin{align}  \label{E cond exp}
        \E^{\Q}\left[\left.Y \1_{\left\{1/Z^\theta\left(T\right) > 0\right\}}\right|\CF(t)\right] =
        \E^\Prob\left[Z^\theta\left(T\right)
         Y|\CF(t)\right] \frac{1}{Z^\theta\left(t\right)} \1_{\left\{1/Z^\theta\left(t\right) > 0\right\}}
    \end{align}
    $\Q$-almost surely (and thus, $\Prob$-almost surely) for all $t \in [0,T]$,
    where $\E^{\Q}$ denotes the expectation with respect to the new
    measure $\Q$.
    Under this measure $\Q$, the process
    $\widetilde{W}(\cdot) = \left(\widetilde{W}_1(\cdot), \ldots \widetilde{W}_K(\cdot)\right)\tr$ with
    \begin{equation} \label{E tilde W}
        \widetilde{W}_k(t \wedge \tau^\theta) := W_k(t\wedge \tau^\theta) + \int_0^{t\wedge \tau^\theta} \theta_k(u, S(u)) du
    \end{equation}
    for all $k = 1, \ldots, K$ and $t \in [0,T]$  is
    a $K$-dimensional Brownian motion stopped at time $\tau^\theta.$
\end{thm}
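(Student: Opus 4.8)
The proof naturally splits into two independent parts: first establishing existence of $\Q$ together with the Bayes-type formula \eqref{E cond exp}, and second identifying the $\Q$-dynamics of $W(\cdot)$ via \eqref{E tilde W}. For the first part, I would take the existence of $\Q$ on $(\Omega,\CF(T))$ satisfying \eqref{D Q 1} as essentially handed to us by the cited extension theorem of Parthasarathy (applied to the consistent family of measures $Z^\theta(\tau^\theta_i\wedge T)\,d\Prob$ on the increasing $\sigma$-algebras $\CF^{i,\theta}$, whose union generates $\CF^{\infty,\theta}$), plus an argument that this extends to $\CF(T)$; the cemetery point $\Delta$ is precisely what makes $\{1/Z^\theta(T)=0\}=\{\tau^\theta\le T\}$ a genuine event carrying the mass that $\Prob$ does not see. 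Absolute continuity $\Prob\ll\Q$ is immediate: if $\Q(A)=0$ for $A\in\CF^{i,\theta}$ then $\E^\Prob[Z^\theta(\tau^\theta_i\wedge T)\1_A]=0$, and since $Z^\theta(\tau^\theta_i\wedge T)>0$ $\Prob$-a.s.\ this forces $\Prob(A)=0$; a monotone-class argument lifts this to all of $\CF(T)$.

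**Deriving the Bayes' rule \eqref{E cond exp}.** The key computational step is to check that for a bounded $\CF(T)$-measurable $Y\ge 0$ and $A\in\CF(t)$,
\begin{equation*}
    \E^\Q\!\left[Y\1_{\{1/Z^\theta(T)>0\}}\1_A\right]
    = \E^\Prob\!\left[\frac{Z^\theta(T)}{Z^\theta(t)}\,\1_{\{1/Z^\theta(t)>0\}}\,Y\,\1_A\right].
\end{equation*}
I would first verify this when $Y$ is $\CF^{i,\theta}$-measurable and $A\in\CF^{j,\theta}$ with $j\le i$ (so that everything reduces to \eqref{D Q 1} with a stopping-time localization and the martingale property of $Z^\theta(\cdot\wedge\tau^\theta_i)$ under $\Prob$), then pass $i\to\infty$ using monotone convergence on the $\Prob$-side, noting that $Z^\theta(\tau^\theta_i\wedge T)\1_{\{\tau^\theta_i>T\}}\to Z^\theta(T)\1_{\{\tau^\theta=\infty \text{ up to }T\}}=Z^\theta(T)\1_{\{1/Z^\theta(T)>0\}}$. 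The indicator $\1_{\{1/Z^\theta(T)>0\}}$ on the left and $\1_{\{1/Z^\theta(t)>0\}}$ on the right are exactly the bookkeeping needed because under $\Q$ the process $1/Z^\theta$ can hit zero (equivalently, $Z^\theta$ can explode, i.e.\ reach $\Delta$), an event of full $\Prob$-measure complement. Dividing through by $\1_A$ ranging over $\CF(t)$ gives the conditional statement; the "$\Q$-a.s.\ and hence $\Prob$-a.s." parenthetical is just $\Prob\ll\Q$.

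**The $\Q$-dynamics of $W(\cdot)$.** For the Brownian-motion claim I would argue on each stochastic interval $[\![0,\tau^\theta_i\wedge T]\!]$, where $\Q$ restricted to $\CF^{i,\theta}$ is equivalent to $\Prob$ restricted to $\CF^{i,\theta}$ with density $Z^\theta(\tau^\theta_i\wedge T)$. On this interval a classical Girsanov theorem applies verbatim: since $dZ^\theta(t)=-Z^\theta(t)\theta\tr(t,S(t))\,dW(t)$, the measure change with density $Z^\theta$ turns $W_k(\cdot)+\int_0^\cdot\theta_k(u,S(u))\,du$ into a $\Q$-Brownian motion on $[\![0,\tau^\theta_i\wedge T]\!]$. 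Then I let $i\to\infty$: consistency of the $\Q|_{\CF^{i,\theta}}$ shows the stopped process $\widetilde W(\cdot\wedge\tau^\theta_i)$ is a $\Q$-Brownian motion stopped at $\tau^\theta_i$ for every $i$, hence $\widetilde W(\cdot\wedge\tau^\theta)$ is a continuous $\Q$-local martingale with the right quadratic variation $\langle\widetilde W_k,\widetilde W_\ell\rangle_{t\wedge\tau^\theta}=\delta_{k\ell}\,(t\wedge\tau^\theta)$, so by Lévy's characterization it is a Brownian motion stopped at $\tau^\theta$. The one genuine subtlety — and the step I expect to be the main obstacle — is the passage to the limit: one must be sure that the family $\{\Q|_{\CF^{i,\theta}}\}$ is consistent (so that the local Girsanov pieces patch together rather than merely existing separately), and that no mass is lost "at infinity," i.e.\ that $\widetilde W$ is only asserted to be a Brownian motion up to $\tau^\theta$, because after $\tau^\theta$ the path has been sent to $\Delta$ and there is nothing to say. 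This is exactly why the statement is phrased with the stopping at $\tau^\theta$ rather than claiming a Brownian motion on all of $[0,T]$. I would close by remarking that on $\{\tau^\theta=\infty\}$ — which has full $\Prob$-measure — the formula \eqref{E tilde W} holds for all $t\in[0,T]$, recovering the usual Girsanov statement whenever $Z^\theta(\cdot)$ is a true martingale.
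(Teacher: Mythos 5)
Your overall architecture matches the paper's: existence of $\Q$ via the extension theorem applied to the consistent family \eqref{D Q 1}, a decomposition over the slices $\{\tau^\theta_{i-1}<T\le\tau^\theta_i\}$ for the Bayes rule, and a localized Girsanov argument on each $\CF^{i,\theta}$ patched together by L\'evy's characterization for \eqref{E tilde W}. The Brownian-motion part is essentially the paper's proof and is fine. The Bayes-rule part, however, contains a genuine gap. Your ``key computational step'' is false as written: summing \eqref{D Q 1} over the slices (on each of which $\tau^\theta_i\wedge T=T$) gives
\begin{equation*}
  \E^{\Q}\left[Y\,\1_{\{1/Z^\theta(T)>0\}}\,\1_A\right]=\E^{\Prob}\left[Z^\theta(T)\,Y\,\1_A\right],\qquad A\in\CF(t),
\end{equation*}
with no factor $1/Z^\theta(t)$ on the right; your right-hand side equals $\E^{\Prob}[\frac{Z^\theta(T)}{Z^\theta(t)}Y\1_A]$ (since $\1_{\{1/Z^\theta(t)>0\}}=1$ $\Prob$-a.s.), which differs from the left-hand side whenever $Z^\theta(t)\not\equiv 1$. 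More structurally, even with the corrected identity you cannot ``divide through by $\1_A$'': to certify the right-hand side of \eqref{E cond exp} as the $\Q$-conditional expectation you must show that its \emph{$\Q$}-integral over every $A\in\CF(t)$ also equals $\E^{\Prob}[Z^\theta(T)Y\1_A]$, and this requires applying the change-of-measure identity a \emph{second} time, at level $t$ instead of $T$, to the $\CF(t)$-measurable candidate $\E^{\Prob}[Z^\theta(T)Y|\CF(t)]\frac{1}{Z^\theta(t)}\1_{\{1/Z^\theta(t)>0\}}$. That second application is exactly the last equality in the paper's chain and is absent from your argument.

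A second, smaller but real, gap is your derivation of $\Prob\ll\Q$. Absolute continuity on each $\CF^{i,\theta}$ (where $\Q$ and $\Prob$ are in fact mutually absolutely continuous with density $Z^\theta(\tau^\theta_i\wedge T)$) does not pass to the generated $\sigma$-algebra by a monotone-class argument: the implication ``$\Q(A)=0\Rightarrow\Prob(A)=0$'' is not a property stable under the limits a monotone class provides, and Kakutani-type product-measure examples are equivalent on every finite level yet mutually singular in the limit --- indeed the whole point of this construction is that $\Q\ll\Prob$ \emph{fails} in general despite level-wise equivalence. The paper obtains $\Prob\ll\Q$ as a corollary of \eqref{E cond exp} at $t=0$: choosing $Y=\1_A\frac{1}{Z^\theta(T)}\1_{\{1/Z^\theta(T)>0\}}$ yields $\Prob(A)=\E^{\Q}[\1_A\frac{1}{Z^\theta(T)}\1_{\{1/Z^\theta(T)>0\}}]$, exhibiting the density $1/Z^\theta(T\wedge\tau^\theta)$ explicitly. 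So the logical order should be reversed: prove the integrated identity first and deduce absolute continuity from it, not the other way around.
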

\begin{proof}
    The existence of a measure $\Q$ satisfying \eqref{D Q 1} follows as in the discussion above.
    We fix an arbitrary set $B \in \CF(t)$.
    It is sufficient to
    show the statement for $Y = \1_A$ where $A \in \CF\left(T\right)$. We have
    $$A = \left(A \cap \left\{\tau^\theta \leq {T}\right\}\right)
        \cup \bigcup_{i=1}^\infty \left(A \cap \left\{\tau^\theta_{i-1} < {T} \leq \tau^\theta_i \right\}\right).$$

    From the fact that $\tau^\theta \leq {T}$ holds if and only if $1/Z^\theta(T) = 0$ holds,
    from the identity in \eqref{D Q 1}, and from the observation that $\Prob\left(\tau^\theta \leq {T}\right) = 0$,
    we obtain
    \begin{align*}
        \Q\left(A \cap \left\{\frac{1}{Z^\theta\left(T\right) } > 0\right\}\cap B\right) =& \sum_{i=1}^\infty
            \Q\left(A \cap \left\{\tau^\theta_{i-1} <T \leq \tau^\theta_i  \right\}  \cap B  \right) \\
            =& \sum_{i=1}^\infty \E^\Prob\left[Z^\theta(\tau^\theta_i \wedge T)
                \1_{A  \cap \left\{\tau^\theta_{i-1} < {T} \leq \tau^\theta_i \right\}\cap B}
               \right] \\
            =& \E^\Prob\left[Z^\theta \left(T\right) \1_{A \cap B}     \right]\\
            =& \E^\Prob\left[Z^\theta\left(t\right) \E^\Prob\left[\left.Z^\theta\left(T\right) \1_A\right|\CF(t)\right]
                \frac{1}{Z^\theta\left(t\right)}\1_{B}  \right]\\
            =& \E^{\Q}\left[\E^\Prob\left[\left.Z^\theta\left(T\right) \1_A\right|\CF(t)\right]
                \frac{1}{Z^\theta\left(t \right)}
                \1_{\left\{1/Z^\theta\left(t\right) > 0\right\}}
                \1_{B} \right].
    \end{align*}
    Here, the last equality follows as the first ones with $\1_A$ replaced by the random variable inside the last
    expression and $T$ replaced by $t$. This yields \eqref{E cond exp}. The fact that $\Prob$ is absolutely continuous with respect to
    $\Q$ follows from setting $t=0$ in \eqref{E cond exp}.
    From Girsanov's theorem \citep[compare][Theorem~8.1.4]{RY} we obtain that on $\CF^{i,\theta}$
    the process
    $\widetilde{W}(\cdot)$ is  under
    $\Q$ a $K$-dimensional Brownian motion stopped at $\tau_i^\theta \wedge T$.
     Since $\cup_{i=1}^\infty \CF^{i,\theta}$ generates $\CF^{\infty,\theta}$ and forms a $\pi$-system, we get the dynamics of
     \eqref{E tilde W}.
\end{proof}
Thus, an ELMM exists if and only if $\Q(1/Z^\theta(T) > 0) = 1$.
A further consequence of Theorem~\ref{T Gen change} is the fact that
the dynamics of the stock price process and the reciprocal of the SDF
simplify under $\Q$ as the next corollary shows.
\begin{cor}[Evolution of important processes under $\Q$] \label{C Evolution}
    The stock price process $S(\cdot)$ and the reciprocal $1/Z^\theta(\cdot)$ of the SDF
    evolve until the stopping time $\tau^\theta$ under $\Q$ according to
    \begin{align*}
        d S_i(t) &= S_i(t) \sum_{k=1}^K \sigma_{i,k}(t, S(t)) d \widetilde{W}_k(t),\\
        d \left(\frac{1}{Z^\theta(t)}\right) &=\frac{1}{Z^\theta(t)} \sum_{k=1}^d \theta_k(t,S(t)) d \widetilde{W}_k(t)
    \end{align*}
    for all $i = 1, \ldots, d$ and $t \in [0,T]$.  Furthermore,
    for any process $N(\cdot)$, $N(\cdot) \1_{\{1/Z^\theta(\cdot) > 0\}}$ is a $\Q$-martingale
    if and only if $N(\cdot)Z^\theta(\cdot)$ is a $\Prob$-martingale.
    In particular, the process  $1/Z^\theta(\cdot)$ is a $\Q$-martingale.
\end{cor}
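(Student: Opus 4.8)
The plan is to deduce all three assertions from Theorem~\ref{T Gen change}. For the first assertion, I would use that, by that theorem, $\widetilde W(\cdot)$ is under $\Q$ a $K$-dimensional Brownian motion \emph{stopped at $\tau^\theta$}, so that on $[0,\tau^\theta]$ the defining relation~\eqref{E tilde W} is the semimartingale identity $dW_k(t)=d\widetilde W_k(t)-\theta_k(t,S(t))\,dt$. Substituting this into the $\Prob$-dynamics~\eqref{market}, the drift of $S_i(\cdot)$ becomes $S_i(t)\big(\mu_i(t,S(t))-\sum_{k}\sigma_{i,k}(t,S(t))\theta_k(t,S(t))\big)dt$, which vanishes because the Markovian MPR satisfies $\mu_i(t,S(t))=\sum_{k}\sigma_{i,k}(t,S(t))\theta_k(t,S(t))$ by~\eqref{E market price}. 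For $1/Z^\theta(\cdot)$, note from~\eqref{E Z} that $1/Z^\theta(t)=\exp\big(\int_0^t\theta\tr(u,S(u))\,dW(u)+\tfrac12\int_0^t\|\theta(u,S(u))\|^2\,du\big)$; It\^o's formula gives $d(1/Z^\theta(t))=(1/Z^\theta(t))\big(\theta\tr(t,S(t))\,dW(t)+\|\theta(t,S(t))\|^2\,dt\big)$, and substituting $dW=d\widetilde W-\theta\,dt$ again cancels the drift. Both computations are valid only on $[0,\tau^\theta]$, which is exactly what is claimed.

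For the equivalence of martingale properties, I would invoke the generalized Bayes' rule~\eqref{E cond exp}. Its proof in fact yields, for any $0\le s\le t\le T$ and any nonnegative $\CF(t)$-measurable $Y$, the identity
\[
\E^{\Q}\left[\left.Y\,\1_{\{1/Z^\theta(t)>0\}}\right|\CF(s)\right]
=\E^{\Prob}\left[\left.Z^\theta(t)\,Y\right|\CF(s)\right]\,\frac{1}{Z^\theta(s)}\,\1_{\{1/Z^\theta(s)>0\}},
\]
which extends by linearity (writing $N(t)=N^+(t)-N^-(t)$) to $Y=N(t)$ whenever $N(\cdot)\1_{\{1/Z^\theta(\cdot)>0\}}$ is $\Q$-integrable or $N(\cdot)Z^\theta(\cdot)$ is $\Prob$-integrable. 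Since $\{1/Z^\theta(s)>0\}=\{0<Z^\theta(s)<\infty\}$, on this set $Z^\theta(s)\cdot(1/Z^\theta(s))=1$. Hence if $N(\cdot)Z^\theta(\cdot)$ is a $\Prob$-martingale, the right-hand side equals $N(s)\1_{\{1/Z^\theta(s)>0\}}$, so $N(\cdot)\1_{\{1/Z^\theta(\cdot)>0\}}$ is a $\Q$-martingale ($\Q$-integrability following from the same identity with $s=0$). Conversely, if $N(\cdot)\1_{\{1/Z^\theta(\cdot)>0\}}$ is a $\Q$-martingale, the displayed identity reads $\E^{\Prob}[Z^\theta(t)N(t)\mid\CF(s)]\,(1/Z^\theta(s))\,\1_{\{1/Z^\theta(s)>0\}}=N(s)\1_{\{1/Z^\theta(s)>0\}}$, $\Q$-a.s.\ and hence $\Prob$-a.s.\ because $\Prob\ll\Q$; since $\Prob(1/Z^\theta(s)>0)=1$ and $0<Z^\theta(s)<\infty$ $\Prob$-a.s., multiplying through by $Z^\theta(s)$ gives $\E^{\Prob}[Z^\theta(t)N(t)\mid\CF(s)]=Z^\theta(s)N(s)$, i.e.\ $N(\cdot)Z^\theta(\cdot)$ is a $\Prob$-martingale.

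The last assertion is then immediate by taking $N(\cdot)=1/Z^\theta(\cdot)$: under $\Prob$ we have $Z^\theta(\cdot)\in(0,\infty)$ almost surely, so $N(\cdot)Z^\theta(\cdot)\equiv 1$ $\Prob$-a.s.\ is a constant, hence a $\Prob$-martingale, while $N(\cdot)\1_{\{1/Z^\theta(\cdot)>0\}}=1/Z^\theta(\cdot)$; thus $1/Z^\theta(\cdot)$ is a $\Q$-martingale.

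I expect the only delicate points to be bookkeeping ones rather than conceptual: first, that $\widetilde W$ is a $\Q$-Brownian motion only up to $\tau^\theta$, so the SDEs in the first assertion hold only on $[0,\tau^\theta]$; and second, the role of the non-equivalence $\Prob\ll\Q$ with $\Q$ possibly charging $\{1/Z^\theta(\cdot)=0\}$ --- this is precisely why the indicators $\1_{\{1/Z^\theta(\cdot)>0\}}$ must be carried along, why the implication from the $\Q$-martingale property to the $\Prob$-martingale property uses absolute continuity, and why the reverse implication (and the resulting statement about $1/Z^\theta(\cdot)$) holds unconditionally.
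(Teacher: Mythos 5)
Your proposal is correct and follows essentially the same route as the paper: the SDEs are obtained by substituting $dW_k(t)=d\widetilde W_k(t)-\theta_k(t,S(t))\,dt$ from \eqref{E tilde W} and cancelling the drift via the MPR relation \eqref{E market price}, and the martingale equivalence is obtained by applying the generalized Bayes' rule \eqref{E cond exp} with $Y$ the (decomposed) process $N$, then specializing to $N(\cdot)=1/Z^\theta(\cdot)$. You merely make explicit the bookkeeping the paper's two-sentence proof leaves implicit (the two-time version of \eqref{E cond exp}, the integrability and sign decomposition of $N$, and the role of $\Prob\ll\Q$ in dropping the indicator $\Prob$-a.s.), all of which is sound.
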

\begin{proof}
    The dynamics are a direct consequence of the representation of $\widetilde{W}(\cdot)$ in \eqref{E tilde W} and the
    definition of the MPR.
    The other statements follow from choosing $Y = N(T)$ and
    $Y = 1/Z^\theta(T)$ in \eqref{E cond exp}.
\end{proof}
The results of the last corollary play an essential role when we do computations, since the first hitting time of
the reciprocal of the SDF
can in most cases be easily represented as a first hitting time of the stock price. This now usually follows  some more
tractable dynamics, as we shall see in Section~\ref{S examples}.
For the case of strict local martingales the equivalence of the last corollary
     is generally not true. Take as an example $N(\cdot) \equiv 1$ and $Z^\theta(\cdot)$ a strict local
    martingale under $\Prob$. Then, $Z^\theta(\cdot) N(\cdot) \equiv Z^\theta(\cdot)$ is a local $\Prob$-martingale but
    $N(\cdot) \1_{\{1/Z^\theta(\cdot) > 0\}} \equiv \1_{\{1/Z^\theta(\cdot) > 0\}}$
    is clearly not a local $\Q$-martingale. The reason for this lack of symmetry is
    that a sequence of stopping times which converges $\Prob$-almost surely to $T$ need not necessarily
    converge $\Q$-almost surely to $T$.
\section{EXAMPLES} \label{S examples}
In this section, we discuss several examples for markets which imply arbitrage opportunities. Examples~\ref{Bes 1} and
\ref{Bes 2}
treat the case of a three-dimensional Bessel process with drift for various payoffs. Example~\ref{Inverse Bessel}
concentrates on the reciprocal of the three-dimensional Bessel, a standard example in the bubbles literature.
\begin{ex}[Three-dimensional Bessel process with drift - money market] \label{Bes 1}
    One of the best known examples for markets without an ELMM is the three-dimensional Bessel process, as
    discussed in \citet{KS1}, Section~3.3.C.
    We study here a class of models which contain the Bessel process as special
    case and generalize the example for arbitrage of A.V.~Skorohod in \citet{KS2}, Section~1.4.
    For that, we begin with defining an auxiliary stochastic process $X(\cdot)$ as a Bessel process with drift $-c$, that is,
    \begin{equation} \label{E Bessel X}
            d X(t) = \left(\frac{1}{X(t)} - c\right) dt + d W(t)
    \end{equation}
    for all $t \in [0,T]$
    with $W(\cdot)$
     denoting a Brownian motion on its natural filtration $\mathbb{F} = \mathbb{F}^W$ and $c \in [0,\infty)$ a constant.
    The process $X(\cdot)$ is  strictly positive, since it is a Bessel process, thus strictly positive under the
    equivalent measure where $\{W(t) - ct\}_{0 \leq t \leq T}$ is a Brownian motion.
    The stock price process is now defined via the stochastic differential equation
    \begin{equation}  \label{E Bessel S}
        d S(t) = \frac{1}{X(t)}dt + d W(t)
    \end{equation}
    for all $t \in [0,T]$.
    Both processes $X(\cdot)$ and $S(\cdot)$ are assumed to start at the same point $S(0)> 0$.
    From \eqref{E Bessel X} and \eqref{E Bessel S} we obtain directly $S(t) = X(t) + ct > 0$
    for all $t \in [0,T]$.
    If
    $c=0$ then $S(\cdot) \equiv X(\cdot)$ and the stock price process is a Bessel process. Of course, the MPR is exactly
    $ \theta(t,s) = 1/ (s - c t)$
    for all $(t,s) \in [0,T] \times \R_+$ with $s>ct$.
    Thus, the reciprocal $1/Z^\theta(\cdot)$ of the SDF
    hits zero exactly when $S(t)$ hits $ct$. This follows directly from the $\Q$-dynamics of $1/Z^\theta(\cdot)$
    derived in Corollary~\ref{C Evolution} and
    a strong law of large numbers as in
    \citet{K_balance}, Lemma~A.2.

    Let us start by looking at a general, for the moment not-specified payoff function $p$.
    For all $(t,s) \in [0,T] \times \R_+$ with $s>ct$, by relying on Theorem~\ref{T Gen change},
    using the density of a Brownian motion absorbed at zero
    \citep[compare][Problem~2.8.6]{KS1} and some simple computations, we obtain
    \begin{align}
        h^p(t,s) =&  \E^{t,s}\left[\tilde{Z}^{\theta, t,s}(T) p(S(T))\right] \nonumber \\
            =&  \left.\E^{\Q}\left[\left.p(S(T)) \1_{\{\min_{t\leq u\leq T} \{S(u) - cu\} > 0\}}
                \right|\CF(t)\right]\right|_{S(t)=s} \nonumber \\
            =&   \int_{\frac{cT-s}{\sqrt{T-t}}}^\infty  \frac{1}{\sqrt{2 \pi}}
                \exp\left(-\frac{z^2}{2} \right) p(z\sqrt{T-t}   + s) dz \nonumber \\
            &-  \exp(2 cs - 2c^2t) \int_{\frac{cT - 2ct+s}{\sqrt{T-t}}}^\infty \frac{1}{\sqrt{2 \pi}}
                \exp\left(-\frac{z^2}{2} \right) p(z\sqrt{T-t}   - s + 2ct) dz.  \label{E Bessel U f}
    \end{align}

    Let us consider the investment in the money market only, to wit, $p(s) \equiv p^0(s) \equiv 1$
    for all  $s > 0$. The expression in \eqref{E Bessel U f} yields the hedging price of one monetary unit
    \begin{equation}   \label{E Bessel U bond}
        h^{p^0}(t,s) = \Phi\left(\frac{s-cT}{\sqrt{T-t}}\right)
            - \exp(2 cs - 2 c^2t) \Phi\left(\frac{-s-cT+2ct}{\sqrt{T-t}}\right),
    \end{equation}
    where $\Phi$ denotes the cumulative standard normal distribution function.
    It can be easily checked that $h^{p^0}$ solves PDE~\eqref{E PDE h} for all
    $(t,s) \in [0,T] \times \R_+$ with $s>ct$. Thus, by Theorem~\ref{T Respresentation markovian}
    the optimal hedging strategy $\eta^0$ of one monetary unit is
    \begin{equation*}
        \eta^0(t, s) = \frac{2}{\sqrt{T-t}} \phi\left(\frac{s-cT}{\sqrt{T-t}}\right) - 2c \exp(2 cs - 2 c^2t) \Phi\left(\frac{-s-cT + 2ct}{\sqrt{T-t}}\right),
    \end{equation*}
    where $\phi$ denotes the standard normal density.

    It is well-known that a Bessel process allows for arbitrage. Compare for example \citet{KK},
    Example~3.6 for an ad-hoc strategy which corresponds to a hedging price of $\Phi(1)$ for a monetary unit if
    $c = 0$ and $S(0)=T=1$.
    We have improved here the existing strategies and found the optimal one, which corresponds in this setup
    to a hedging price of $h^{p^0}(0,1) = 2\Phi(1) - 1 < \Phi(1)$.
\end{ex}
\begin{remark}[Multiple solutions for PDE~\eqref{E PDE h}]
    We observe that the hedging price $h^{p^0}$ in \eqref{E Bessel U bond} depends on the drift $c$.
    Also, $h^{p^0}$ is sufficiently differentiable, thus by Remark~\ref{R non-uniqueness} uniquely characterized
    as the minimal nonnegative solution of PDE~\eqref{E PDE h}, which does not depend on the drift $c$.
    The uniqueness of  $h^{p^0}$ by Remark~\ref{R non-uniqueness} and the dependence of $h^{p^0}$ on $c$
    do not contradict each other, since
    the nonnegativity of  $h^{p^0}$ has only to hold at the points of support for $S(\cdot)$.
    For a given time $t\in [0,T]$, these are
    only the points $s>ct$. Thus, as $c$ increases, the nonnegativity condition weakens since it has to hold for fewer
    points, and thus $h^p$ can become smaller and smaller. Indeed, plugging in \eqref{E Bessel U bond}
    the point $s=ct$ yields $h^{p^0}(t,ct) = 0$.  In summary, while the PDE itself does only depend on the
    (more easily observable) volatility structure of the stock price dynamics, the mean rate of return determines where the PDE
    has to hold.
\end{remark}
In the next example we price and hedge a European call within the same class of models as in the last example.
\begin{ex}[Three-dimensional Bessel process with drift - stock and European call] \label{Bes 2}
Plugging in \eqref{E Bessel U f} the payoff $p(s) = p^C(s) = (y-L)^+$ for some $L \geq 0$
 and writing $\tilde{L} := \max\{cT,L\}$,
a simple computation yields
\begin{align*}
    h^{p^C}(t,s)
        =& \sqrt{\frac{T-t}{2\pi}} \exp\left(-\frac{(s-\tilde{L})^2}{2 (T-t)}\right) +
            (s-L) \Phi\left(\frac{s-\tilde{L}}{\sqrt{T-t}}\right)
        - \exp(2 cs - 2c^2t) \\
        &\cdot \left(\sqrt{\frac{T-t}{2\pi}}
            \exp\left(-\frac{(\tilde{L} - 2c t + s)^2}{2 (T-t)}\right)
        + (2ct - s - L) \Phi\left(\frac{-\tilde{L}+2ct-s}{\sqrt{T-t}}\right)\right).
    \end{align*}
If $L\leq cT$, in particular if $L=0$, the last expression simplifies to
\begin{equation*}
    h^{p^C}(t,s)
        = s \Phi\left(\frac{s-cT}{\sqrt{T-t}}\right) + \exp(2 cs - 2c^2t)
            \Phi\left(\frac{2ct-s-cT}{\sqrt{T-t}}\right)
            (s - 2ct)
           -L  h^{p^0}(t,s),
\end{equation*}
where $h^{p^0}$ denotes the hedging price of one monetary unit given in \eqref{E Bessel U bond}.
It is simply the difference between the hedging price of the stock and $L$ monetary units since if
$L\leq cT$, the call is always exercised.
Using $L=0$ we get the value of the stock.  We could now proceed by computing the derivative of $h^{p^C}$ in $s$
to get the hedge.  Furthermore, the modified put-call parity of Corollary~\ref{C put call} provides us
directly with the hedging price for a put.

If $L=c = 0$, we write $p^1 \equiv p^L$ and the last equality yields $ h^{p^1}(t,s) = s$ for all $(t,s) \in [0,T] \times \R_+$
and holding the stock is optimal. There are two other ways to see this result right away. Simple computations
show directly that $\tilde{Z}^{\theta,t,s}(T) = s/S(T)$ if $c = 0$, thus
$h^{p^1}(t,s) = s$ for all $(t,s) \in [0,T] \times \R_+$. Alternatively,
using the representation of $h^{p^1}(t,s)$ implied by \eqref{E cond exp} we see that the hedging
price is just the expectation of a Brownian motion stopped at zero, thus the expectation of a martingale started at $s$.

Two notable observations can be made. First, in this model both the money market and the stock
simultaneously have a hedging price cheaper than their current price, as long as $c>0$.
Second, in contrast to classical theory, the mean rate of return under the
``real-world'' measure does matter in determining the hedging price of calls (or other derivatives).
\end{ex}
\citet{PP} compute call prices for the reciprocal Bessel process model.
We discuss next how the results
of the last examples relate to this model.
\begin{ex}[Reciprocal of the three-dimensional Bessel process] \label{Inverse Bessel}
    Let the stock price $\tilde{S}(\cdot)$ have the dynamics
    \begin{equation*}
            d \tilde{S}(t) = - \tilde{S}^2(t) d W(t)
    \end{equation*}
    for all $t \in [0,T]$ with $W(\cdot)$ denoting a Brownian motion on its natural filtration $\mathbb{F} = \mathbb{F}^W$.
    The process $\tilde{S}(\cdot)$ is exactly the reciprocal of the process $S(\cdot)$ of Examples~\ref{Bes 1} and \ref{Bes 2}
    with $c=0$, thus strictly positive.
    We observe that  $\Prob$ is already
    a martingale measure. However, if one
    wants to hold the stock at time $T$, one should not buy the stock at time zero, but use the strategy
    $\eta^1$ below for a hedging price smaller than $\tilde{S}(0)$
    along with the suboptimal strategy $\eta(\cdot, \cdot)  \equiv 1$. That is, the stock has a bubble.

    We have already observed that $\tilde{S}(T) = 1/S(T)$, which is exactly the SDF in
    Example~\ref{Bes 1} for $c=0$ multiplied by $\tilde{S}(t)$. Thus, as in
    \eqref{E Bessel U bond} with $c=0$, the hedging price for the stock is
    \begin{equation} \label{E h price iB}
        h^{p^1}(t,s) = 2 s \Phi\left(\frac{1}{s \sqrt{T-t}}\right) - s < s
    \end{equation}
    along with the optimal strategy
    \begin{equation*}
        \eta^1 (t, s) = 2 \Phi\left(\frac{1}{s \sqrt{T-t}}\right) - 1 - \frac{2}{s\sqrt{T-t}}   \phi\left(\frac{1}{s\sqrt{T-t}}\right)
    \end{equation*}
    for all $(t,s) \in [0,T) \times \R_+$.
    For pricing calls, we observe
    \begin{equation*}
        \left(\tilde{S}(T) - L\right)^+ = L \tilde{S}(T) \left(\frac{1}{L} - \frac{1}{\tilde{S}(T)}\right)^+
            = \frac{L}{S(t)} \cdot \frac{S(t)}{S(T)} \left(\frac{1}{L} - S(T)\right)^+
    \end{equation*}
    for $L > 0$. Thus, the price at time $t$ of a call with strike $L$ in the reciprocal Bessel model is the price of
    $L\tilde{S}(t)$
    puts with strike $1/L$ in the Bessel model  and can be computed from Example~\ref{Bes 2} and
    Corollary~\ref{C put call}. For $S(0)=1$, simple computations will lead directly to Equation~(6) of \citet{PP}.
    The optimal strategy could now be derived with Theorem~\ref{T Respresentation markovian}.
\end{ex}
\section{CONCLUSION} \label{S conclusion}
It has been proven that, under weak technical assumptions, there is no equivalent local martingale measure needed
to find an optimal hedging strategy based upon the familiar delta hedge. To ensure its existence, weak sufficient
conditions have been introduced which guarantee the differentiability of an expectation parameterized over time and
over the original market configuration.
The dynamics of stochastic processes simplify after
 a non-equivalent change of  measure and a generalized Bayes' rule has been derived.
With this newly developed
machinery, some optimal trading strategies have been computed addressing standard examples for which so far only ad-hoc
and not necessarily optimal strategies have been known.

\bibliographystyle{apalike}
\setlength{\bibsep}{1pt}
\small\bibliography{aa_bib}{}
\end{document}